\documentclass[journal]{IEEEtran}
 \usepackage{caption}
\usepackage[utf8]{inputenc}
\usepackage{amsmath,amssymb,amsfonts}
\usepackage{microtype}

\usepackage{graphicx}
\usepackage{cite}
\usepackage{xcolor}
\usepackage{float}
\usepackage{bm}
\usepackage{hyperref}
\usepackage{multirow}
\usepackage{makecell}
\usepackage{tikz}
\usetikzlibrary{arrows.meta, positioning, shapes, calc}
\usepackage{algorithmic}
\usepackage{algorithm}
\usepackage{array}
\usepackage{mathdots}
 \usepackage{mathtools}
\usepackage{subcaption}
\usepackage{enumitem}
\usepackage{comment}
\usepackage{textcomp}
\usepackage{stfloats}
\usepackage{url}
\usepackage{verbatim}
\newtheorem{definition}{Definition}

\newtheorem{theorem}{Theorem}
\newtheorem{proposition}{Proposition}
\newtheorem{corollary}{Corollary}
\newtheorem{assumption}{Assumption}

\def\real{\mathbb{R}}

\newcommand{\cB}{\ensuremath{\mathcal{B}}}

\newcommand{\cI}{\ensuremath{\mathcal{I}}}

\newcommand{\cP}{\ensuremath{\mathcal{P}}}
\newcommand{\cQ}{\ensuremath{\mathcal{Q}}}

\newcommand{\cT}{\ensuremath{\mathcal{T}}}
\newcommand{\cU}{\ensuremath{\mathcal{U}}}

\newcommand{\cX}{\ensuremath{\mathcal{X}}}

\definecolor{ieeegreen}{RGB}{0,110,70}
\definecolor{ieeeblue}{RGB}{0,51,153}

\title{Harmonic Stability of Power Systems: A Control-Theoretic Definition and Assessment Criteria}
\author{%
  N.~Bhoir\textsuperscript{1},
  A.~Sarkar\textsuperscript{1},
  J.E.~Machado\textsuperscript{1},
  J.~Schiffer\textsuperscript{1,2}%
  \thanks{This work was supported by the German Federal Government, the Federal Ministry of Research, Technology and Space, and the State of Brandenburg within the framework of the joint project EIZ: Energy Innovation Center (project numbers 85056897 and 03SF0693A) with funds from the Structural Development Act (Strukturstärkungsgesetz) for coal-mining regions.}%
  \thanks{\textsuperscript{1}Control Systems and Network Control Technology Group, Brandenburg University of Technology Cottbus--Senftenberg, 03046 Cottbus, Germany \texttt{\{bhoir, sarkar, machadom, schiffer\}@b-tu.de}.}%
  \thanks{\textsuperscript{2}Fraunhofer IEG, Fraunhofer Research Institution for Energy Infrastructures and Geotechnologies IEG, 03046 Cottbus, Germany.}%
}

\begin{document}
\maketitle

\begin{abstract}
Harmonic interactions have become a defining dynamic stability phenomenon in converter-based power systems (CBPSs). However, a formalization of the notion of harmonic stability in the context of nonlinear dynamical systems is not available thus far. In this paper, we propose a definition of harmonic stability formulated as a combination of two standard stability notions, namely, bounded-input bounded-output (BIBO) and internal stability properties with respect to any nominal periodic trajectory.
Moreover, we rigorously show that the local harmonic stability of any nominal periodic trajectory of nonlinear CBPSs is implied by the stability properties of its linear time-periodic (LTP) approximation.
In addition, to enable computationally tractable stability assessment, we develop a 
framework based on harmonic state-space (HSS) representations of LTP models. In particular, we provide time-invariant linear matrix inequality (LMI) conditions to certify harmonic stability.
The proposed methodology is illustrated on a grid-following converter system, where the HSS-based analysis numerically confirms harmonic stability around the periodic operating trajectory.
\end{abstract}

\begin{IEEEkeywords}
Harmonic stability,
converter-based power systems,
linear time-periodic systems,
harmonic state-space,
bounded-input bounded-output stability.
\end{IEEEkeywords}

\section{Introduction}\label{sec:introduction}
\IEEEPARstart{T}{he} ongoing effort to decarbonize power systems is driving the global energy transition toward the large-scale deployment of renewable energy sources. 
Renewable energy sources are predominantly integrated into the electrical grid via power electronic converters, which increasingly replace conventional synchronous machines as the primary interfaces between generation units and the power network~\cite{hatziargyriou2021}. As a result, modern power systems are becoming progressively more converter-dominated, fundamentally altering their dynamic behavior~\cite{wang2019}.

Power electronic converters offer flexible and precise control of power flows, enabling independent regulation of active and reactive power, voltage magnitude, and frequency~\cite{CIGRE_TB909,wang2019}. Moreover, their fast digital control loops and high controllability further facilitate advanced grid-supporting functionalities such as synthetic inertia and voltage and frequency support~\cite{hatziargyriou2021}. 
However, the high switching frequencies, phase-locked loop (PLL) dynamics, and rotating-frame control structures based on the abc--dq transformation inherent to many power electronic converters introduce complex dynamic phenomena, including harmonic interactions and broadband frequency coupling~\cite{Ying1997}. These interactions can lead to sustained oscillatory behavior and reduced system damping, often interpreted as negative damping in converter-based power systems (CBPSs)~\cite{Mollerstedt2000out_of_control}. Such phenomena have been observed both in analytical studies and in practical grid deployments~\cite{Buchhagen} and may ultimately compromise system stability~\cite{Sun2011}. As a result, the notion of  \emph{harmonic stability} emerged as a distinct concept for characterizing the behavior of nonlinear CBPSs under periodic and harmonically distorted operating conditions~\cite{arrillaga2003power,wang2019}. 
Early studies of harmonic stability predominantly adopted a steady-state viewpoint, in which a system is considered harmonically stable if bounded harmonic inputs produce bounded steady-state harmonic responses~\cite{arrillaga2003power}. Such approaches were initially developed for the analysis of individual devices and subsystems and were later extended to interconnected power networks, enabling the assessment of harmonic propagation, resonance phenomena, and harmonic interactions at the system level 
~\cite{Ying1997,mollerstedt2000,wang2019,becker2024}.

Beyond the steady-state perspective, the dynamic behavior of nonlinear CBPSs has been studied using various complementary stability concepts and methods. Under balanced sinusoidal operating conditions, converter dynamics are commonly transformed into a synchronously rotating $(dq)$ reference frame and linearized about an equilibrium point, yielding linear time-invariant (LTI) models that form the basis of impedance- and transfer-function-based stability analyses~\cite{Sun2011,Agorreta2011,Amin2017,CIGRE_TB909}.
However, the synchronous transformation is typically constructed using the nominal or estimated fundamental component and therefore neglects the exact periodic structure of the steady-state trajectory. Consequently, harmonic components are not represented explicitly, and cross-frequency coupling effects are eliminated by construction. As a result, such LTI models cannot capture the generation of new harmonics or the cross-frequency coupling phenomena that arise in nonlinear CBPSs~\cite{Mollerstedt2000out_of_control,Guerrero2019,DeRua2020}.

Overcoming the inherent limitations of the above-mentioned LTI approximations, linear time-periodic (LTP) system representations were proposed in the literature to locally characterize the behavior of nonlinear CBPS dynamics around general periodic steady motions \cite{wereley_GNC1990,wereley1990analysis}. Such LTP representations arise from the linearization of nonlinear dynamics around a periodic operating trajectory and are capable of modeling converter-induced harmonic interactions \cite{wereley1990analysis,mollerstedt2000,Mollerstedt2000out_of_control}, thus providing a suitable framework for assessing stability in contexts of harmonically distorted operating conditions~\cite{ salis2018,Beloqui2024}.

 Within the LTP framework, several stability analysis approaches have been reported in the literature, including Floquet multipliers and characteristic exponents for periodic-orbit stability analysis~\cite{Bittanti1991, Chicone2006}, generalized Nyquist criteria and harmonic transfer function (HTF) representations~\cite{wereley_GNC1990,wereley1990analysis,mollerstedt2000,liao2022}, as well as harmonic state-space (HSS) formulations and associated eigenvalue-based analyses~\cite{Zhou2004HarmonicStateOperator,GeoffreyLove,Kwon2017HarmonicSS,salis2018, Hu2025}. Collectively, these approaches have substantially advanced the understanding of converter-driven harmonic phenomena and stability mechanisms~\cite{wang2019,Eckel2024}; however,  these works either adopt an orbital-stability (\emph{e.g.}, \cite{Bittanti1991,Chicone2006}) or a bounded-input bounded-output (BIBO) stability interpretation (\emph{e.g.}, \cite{Sun2011,Agorreta2011,Amin2017}), thus highlighting the need for a unified formal definition for characterizing the notion of harmonic stability from the viewpoint of dynamical systems theory. 
Moreover, these approaches predominantly assess stability through the associated LTP approximations rather than formally defining or assessing the harmonic stability of the underlying nonlinear CBPS dynamics~\cite{Bittanti1991,Chicone2006,mollerstedt2000,salis2018}.

Considering the aforementioned discussion, in this paper we propose a control-theoretic framework for the harmonic stability analysis of nonlinear CBPSs.  Our framework is based on LTP and HSS system representations and hence is capable of capturing harmonic interactions and cross-frequency coupling effects, which are neglected by conventional LTI models. In particular, we make the following contributions:

\begin{itemize}
\item[i)] We introduce a control-theoretic definition of harmonic stability for nonlinear CBPSs. 
This definition combines BIBO stability with local internal stability with respect to
a nominal periodic operating trajectory.

\item[ii)] We establish sufficient conditions under which local harmonic stability of nonlinear CBPSs can be certified with respect to any feasible nominal periodic operating trajectory through the analysis of the corresponding LTP approximations around it. The proposed conditions combine time-varying matrix inequalities for the LTP system with suitable smoothness assumptions on the nonlinear dynamics. 

\item[iii)] Based on finite-dimensional HSS approximations of LTP systems, we develop a computationally tractable approach for assessing local harmonic stability via linear matrix inequalities (LMIs) and establish conditions under which the resulting characterization is exact.

\end{itemize}
The remainder of this paper is organized as follows. The notation,
assumptions, and stability notions that form the technical basis of our developments appear in Section~\ref{sec:Preliminaries}. The proposed harmonic-stability framework for nonlinear CBPS dynamics are introduced in
Section~\ref{sec:HS_formalization}. The analytical tools through which  harmonic stability
is linked to the LTP linearization and to HSS representations is developed in
Section~\ref{sec:HS_assessment_tools}. A detailed numerical case study based on a grid-following converter is reported in Section~\ref{sec:case_study}, and concluding remarks are given in Section~\ref{sec:conclusions}.

\section{Notation and Technical Background}\label{sec:Preliminaries}

In this section, we summarize the notation that is adopted in the paper and introduce some stability notions for nonlinear non-autonomous systems relevant for our subsequent developments. 
\vspace{-0.3cm}
\subsection*{Notation}
The sets of real, complex, natural, and integer numbers are denoted by $\mathbb R$, $\mathbb {C}$, $\mathbb {N} $, and $\mathbb {Z} $, respectively, and $\mathbb R_{\ge 0}$ denotes the set of non-negative real numbers. The $n$-dimensional Euclidean space is denoted by $\mathbb{R}^n$. The identity matrix of size $n$ is denoted by $I_n$, the zero vector in $\mathbb{R}^n$ by $\mathbf{0}_n$, and the zero matrix of size $m \times n$ by $\mathbf{0}_{m \times n}$. For a matrix $A$, $A^\top$ and $A^\star$ denote its transpose and conjugate transpose, respectively, and $\|A\|$ denotes its induced spectral norm unless otherwise specified. 
For a matrix $A \in \real^{n \times n}$, we write $A \succ 0$ (resp., $A \succeq 0$) if $A$ is symmetric, positive-definite (resp., positive-semidefinite), i.e., for any $x \in \real^n\setminus\{\mathbf{0}_n\}$, $x^\top A x > 0$ (resp., $x^\top A x \geq 0$). The imaginary unit $\sqrt{-1}$ is denoted by $j$. The real and imaginary parts of a complex quantity are denoted by
$\Re(\cdot)$ and $\Im(\cdot)$, respectively. The notation $\mathrm{diag}(\cdot)$ denotes a diagonal matrix with the specified entries on its main diagonal, and $\mathrm{blkdiag}(\cdot)$ denotes a block-diagonal matrix with the specified matrix blocks. For a subset $A$ of a normed linear space, by $\mathrm{int} (A)$ we refer to the interior of $A$. We define an open ball of radius $r$ around a set $A$ by $B_r(A) \coloneqq \{x \in \real^n \mid \inf_{y \in A} \{\|x-y\|\} < r\}$. 
For a square matrix $A$, we denote by $\sigma(A)$ its spectrum, i.e., the set of all eigenvalues of $A$. For a symmetric matrix $A \in \real^{n \times n}$, we denote its maximum eigenvalue by $\lambda_{\max}(A)$. For a vector $x :=[x_1,\;x_2,\;\cdots,\;x_n]^\top\in \mathbb{R}^n$, $\|x\|$ is its Euclidean norm defined as $\|x\| := \sqrt{x_1^2 + x_2^2+\cdots+x_n^2}$, and its infinity norm $\Vert x \Vert_\infty$ is defined as $ \|x\|_{\infty} \coloneqq \sup_{i} |x_i|$. 
A function $f:\mathbb{R}\rightarrow \mathbb R$ is said to be $T$-periodic if $T>0$ is the smallest real number for which $f(t+T)=f(t)$ for all $t\in \mathbb{R}$. For a continuously differentiable vector field $(x,y)\mapsto f(x,u)$,  we denote by $\frac{\partial f(x,u)}{\partial x}$ and
$\frac{\partial f(x,u)}{\partial u}$ the Jacobian matrices of $f$ with respect
to $x$ and $u$, respectively; with a slight abuse of notation, the evaluation of these
Jacobians at any continuous trajectory $(x^\star(t),u^\star(t))$ are respectively  denoted by  $\frac{\partial f(x^\star(t),u^\star(t))}{\partial x}$ and
$\frac{\partial f(x^\star(t),u^\star(t))}{\partial u}$. 

\subsection*{Relevant Stability Notions}
Let $D_x := \{x \in \real^n:~ \|x\|\leq r_x\}$ and $D_u := \{u \in \real^m :~ \|u\|\leq r_u\}$ for some positive constants $r_x$ and $r_u$ and consider the following nonlinear  system in state-space form~\cite{khalil, YangNL2021,Cecati2022}:
\begin{subequations} \label{eq:system}
\begin{align}
\dot{x}(t) &= f(x(t), u(t)), \label{eq:state} \\
y(t) &= h(x(t)), \label{eq:output}
\end{align}
\end{subequations}
where $x(t)\in D_x$ is the state, $u(t)\in D_u$ the external input and $y(t)\in\mathbb{R}^p$ the output. The drift vector field is $f:D_x\times D_u\to\mathbb{R}^n$, with $f(\mathbf{0}_n,\mathbf{0}_m)=\mathbf{0}_n$, and the output map is $h:D_x\to\mathbb{R}^p$ with $h(\mathbf{0}_n) = \mathbf{0}_p$.

The following assumptions are imposed for the system~\eqref{eq:system} to ensure sufficient smoothness of its linearization.
\begin{assumption}\label{assum:f}
Consider the system~\eqref{eq:system}.
    \begin{enumerate}
        \item[(i)]  The map $f$ is continuously differentiable on $D_x\times D_u$.
        \item[(ii)] The output map $h$ is continuously differentiable on $D_x$.
    \end{enumerate}
    \hfill $\square$
\end{assumption}

Last, we recall some standard stability notions to describe boundedness and convergence properties for the system \eqref{eq:system}.
\begin{definition}[BIBO Stability \cite{khalil}]\label{sec:BIBO}
	The system \eqref{eq:system} is BIBO stable if for every bounded input $u(t)$, the output $y(t)$ remains bounded, i.e.,
	\begin{equation}
		\|u(t)\|_\infty < \infty \Rightarrow \|y(t)\|_\infty < \infty\;\forall~ t \in \mathbb{R}.
	\end{equation}

The system \eqref{eq:system} is \emph{small-signal} BIBO stable if there exists an $\epsilon > 0$, such that for every input $u(t)$ with $\|u(t)\|_\infty < \epsilon$, the output $y(t)$ is bounded, i.e.,
\begin{equation}
    \|u(t)\|_\infty < \epsilon \quad \Rightarrow \quad \|y(t)\|_\infty < \infty\; \forall~ t \in \mathbb{R}.
   \label{eq:local_bibo}
\end{equation}
\hfill $\square$
\end{definition}

\begin{definition}[Local Uniform Asymptotic Stability {\cite[Definition~4.4]{khalil}}]
For $u\equiv\mathbf{0}_m$, the equilibrium point $x=\mathbf{0}_n$ of the system~\eqref{eq:system} is said to be \emph{locally uniformly asymptotically stable} if, for every $\epsilon>0$, there exists $\delta=\delta(\epsilon)>0$, independent of the initial time $t_0$, such that
$\|x(t_0)\|<\delta$ implies $\|x(t)\|<\epsilon$ for all $t\ge t_0\ge0$, and there exists a positive constant $c>0$, independent of $t_0$, such that, for all $\|x(t_0)\|<c$, $x(t)\to\mathbf{0}_n$ as $t\to\infty$, uniformly in $t_0$; that is, for every $\eta>0$, there exists $T=T(\eta)>0$, independent of $t_0$, such that
$\|x(t)\|<\eta,\;\forall\, t\ge t_0+T(\eta),\;\forall\, \|x(t_0)\|<c.$
\hfill $\square$
\end{definition}

\section{Harmonic Stability in CBPSs: A Control-Theoretic Definition}\label{sec:HS_formalization}
In general, the dynamics of CBPSs can be modeled as a nonlinear, non-autonomous system of the form~\eqref{eq:system}, see \cite{YangNL2021,Cecati2022}. Thus, for the subsequent derivations, we consider CBPSs represented by \eqref{eq:system}. Building on the above notions of stability, we now define harmonic stability for nonlinear CBPSs.
To this end, we begin by defining a {\em nominal} periodic motion for the system~\eqref{eq:system}.

\begin{definition}[Nominal periodic motion] A triple $(u^\star(t),x^\star(t),y^\star(t))\in \mathrm{int}(D_u)\times \mathrm{int}(D_x)\times \mathbb{R}^p$ is a nominal periodic motion of the system~\eqref{eq:system} if \( u^\star(t)\) is a continuous, time-periodic input with period $T_1>0$, which gives rise to a corresponding bounded time-periodic trajectory \( x^\star(t) \) with period $T_2>0$, and a corresponding bounded time-periodic output \(y^\star(t)\) with period $T_3>0$ satisfying 
\begin{subequations}\label{eq:steady_state_eqs}
    \begin{align}
\dot x^\star(t) &= f\bigl(x^\star(t),u^\star(t)\bigr), \\
 y^\star(t) &= h\bigl(x^\star(t)\bigr).
\end{align}
\end{subequations} 
    \hfill $\square$
\end{definition}
Importantly, nonlinear CBPSs dynamics may generate additional harmonic or subharmonic frequency components through cross-coupling effects~\cite{mollerstedt2000,becker2024}. Thus, in general, the periods $T_1$, $T_2$, and $T_3$ need not coincide. For illustrative purposes, Fig.~\ref{fig:nominal_periodic_motion} shows the response of a forced van der Pol--Duffing oscillator~\cite{jordan1999nonlinear}. A realistic nonlinear CBPS case study is presented in Section~\ref{sec:case_study}.
\begin{figure}[!t]
    \centering
    \includegraphics[width=\linewidth]{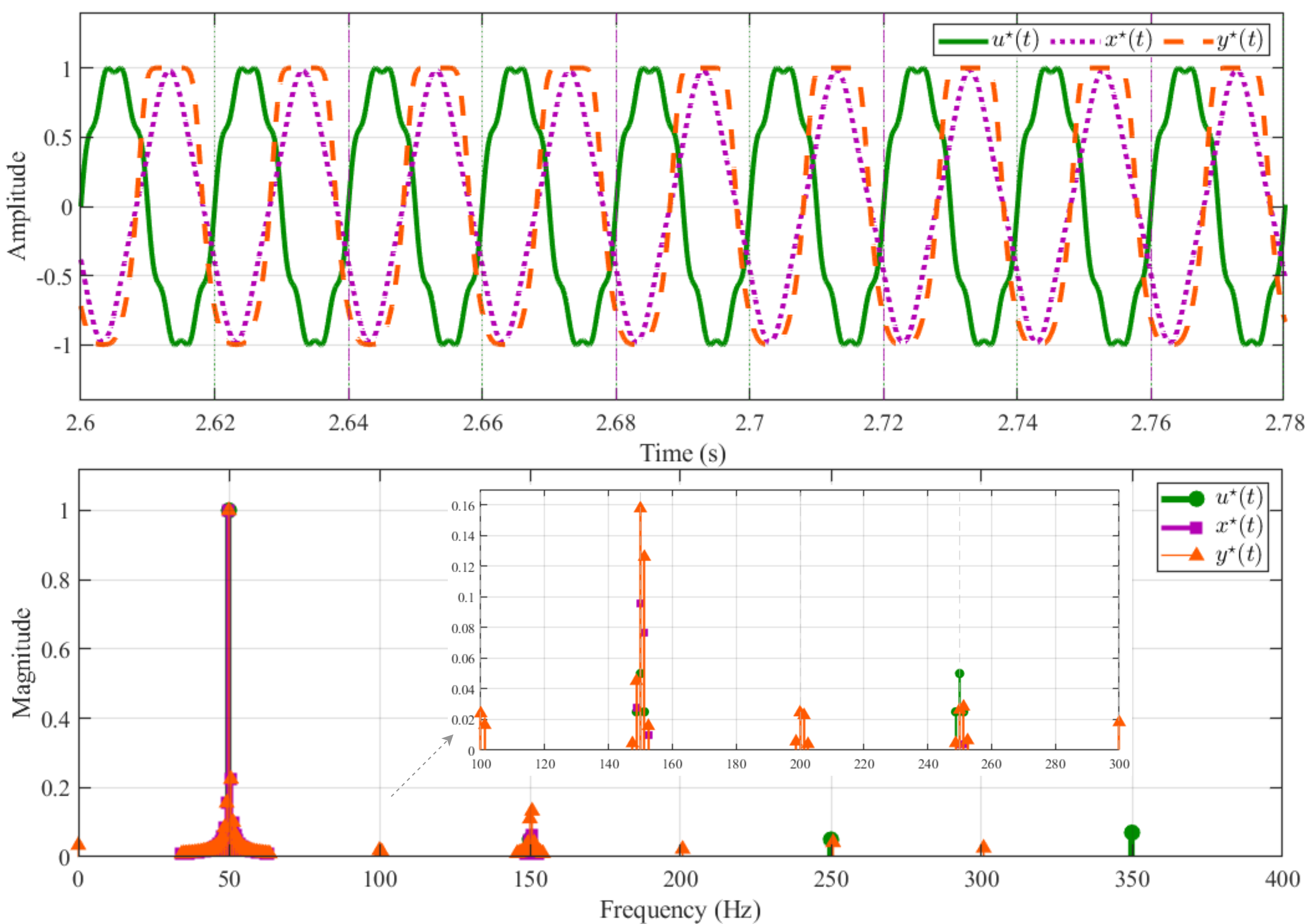}
  \caption{Illustrative nominal periodic motion $(u^\star(t),x^\star(t),y^\star(t))$ and corresponding spectra for a forced van der Pol--Duffing oscillator. While the input contains harmonics at 50, 150, 250, and 350~Hz, nonlinear dynamics amplify the 150 and 250~Hz components in the state response and generate additional components near 100, 200, and 300~Hz in the output response.}
    \label{fig:placeholder}
    \label{fig:nominal_periodic_motion}
\end{figure}
Also, a special case of a nominal periodic motion is a standard constant steady-state solution of~\eqref{eq:system} with a constant input $u^\star.$

We next introduce the following natural assumption.
\begin{assumption}\label{ass:periodic_trajectory}
The system~\eqref{eq:system} possesses a nominal periodic motion $(u^\star(t),x^\star(t),y^\star(t))$ and $T_1, T_2, T_3$ are commensurable, i.e., their ratios are rational numbers. 
\hfill $\square$
\end{assumption}

The commensurability assumption ensures the existence of a common period \(T=\mathrm{LCM}\{T_1, T_2, T_3\}\), where $\mathrm{LCM}$ denotes the least common multiple, which is required for the Fourier-series-based HSS representations employed later in the paper.

Consider the system~\eqref{eq:system} and a nominal periodic motion $(u^\star(t),x^\star(t),y^\star(t))$. Define the error coordinates \footnote{To simplify notation, the explicit time dependence of the signals $u$, $x$, $y$, and the error coordinates $\tilde{u}$, $\tilde{x}$, $\tilde{y}$ is omitted throughout the Section~\ref{sec:HS_formalization} and~\ref{sec:HS_assessment_tools}.}
\begin{align}\label{eq:error_cords}
\tilde u \coloneqq u-u^\star(t), ~
\tilde x \coloneqq x-x^\star(t), ~
\tilde y \coloneqq y-y^\star(t).
\end{align}
Then, the error dynamics of the system~\eqref{eq:system} evolve according to the following equations:
\begin{subequations}\label{eq:error_dynamics}
    \begin{align}
  \dot{\tilde x}
 & 
= f\bigl(x(t),u(t)\bigr)
- f\bigl(x^\star(t),u^\star(t)\bigr),\\
\tilde{y}& = h(x(t))-h(x^\star(t)).
\end{align}
\end{subequations}
Now, we provide the proposed definition of harmonic stability, which unifies several notions of harmonic stability employed in the literature \cite{Sun2011,Agorreta2011,Amin2017,Zhou2004HarmonicStateOperator,Kwon2017HarmonicSS,wereley1990analysis,becker2024}.

\begin{definition}[Harmonic Stability]\label{def:HS}
Consider the system~\eqref{eq:system} subject to Assumptions~\ref{assum:f} and~\ref{ass:periodic_trajectory}. Then, \eqref{eq:system} is \emph{harmonically stable} with respect to the nominal periodic motion $(u^\star(t),x^\star(t),y^\star(t))$ if the error dynamics \eqref{eq:error_dynamics} are small-signal BIBO stable, and if $\tilde{x} =\mathbf{0}_n$ is a locally  uniformly asymptotically  stable equilibrium of \eqref{eq:error_dynamics} for $\tilde{u}= \mathbf{0}_m$ $\forall t\geq 0$.
\hfill $\square$
\end{definition}

Definition~\ref{def:HS} provides a control-theoretic characterization of harmonic stability for nonlinear CBPSs operating around periodic trajectories. On the one hand, the BIBO stability requirement captures the bounded-response behavior of the system under sufficiently small perturbations of the nominal input, which is consistent with impedance- and Nyquist-based approaches reported in \cite{Sun2011,Agorreta2011,Amin2017}. On the other hand, the asymptotic stability requirement ensures that the nominal periodic motion constitutes a stable operating solution under the nominal input, which is related to harmonic-domain and HSS eigenvalue analyses reported in \cite{Zhou2004HarmonicStateOperator,Kwon2017HarmonicSS} as well as to LTP-based formulations \cite{wereley1990analysis,becker2024}. Consequently, the proposed definition of harmonic stability combines BIBO responses with convergence of the internal dynamics toward the nominal periodic motion. In that way, Definition~\ref{def:HS} consistently integrates the prevalent harmonic stability assessment results from the literature.

\vspace{-0.3cm}
\section{Harmonic Stability: Proposed Assessment Tools}\label{sec:HS_assessment_tools}

In this section, we propose two approaches to assess the harmonic stability of nonlinear CBPSs based on the corresponding LTP representation of the dynamics \eqref{eq:system} around a nominal periodic motion.
\vspace{-0.3cm}
\subsection{Harmonic Stability Assessment in Nonlinear CBPSs via Linearization}\label{sec:HS_nonlinear}

To characterize the local behavior of the error dynamics~\eqref{eq:error_dynamics} around the origin, we invoke the mean value theorem \cite{Rudin64} to write the following equation:
    \begin{align}\label{eq:mean_value_dirty}
        f(x,u) - f(x^\star(t),u^\star(t)) &=\frac{\partial f(x'(t),u'(t))}{\partial x}(x-x^\star(t)) \nonumber\\
        & \phantom{=}+\frac{\partial f(x'(t),u'(t))}{\partial u}(u-u^\star(t)),
    \end{align}
for some $(x'(t),u'(t))\in D_x\times D_u$ in the line segment connecting $(x,u)$ with  $(x^\star(t),u^\star(t))$. By adding and subtracting the terms $\frac{\partial f(x^\star(t),u^\star(t))}{\partial x}(x-x^\star(t))$ and $\frac{\partial f(x^\star(t),u^\star(t))}{\partial u}(u-u^\star(t))$ to/from the right-hand side of \eqref{eq:mean_value_dirty}, we get
\begin{subequations}\label{eq:mean_value_f}
\begin{align}
     f(x,u) - f(x^\star(t),u^\star(t)) = A(t)\tilde{x}+B(t)\tilde{u}+\phi(t,\tilde{x},\tilde{u}),
\end{align}
where
    \begin{align}
     A(t) & =\frac{\partial f(x^\star(t),u^\star(t))}{\partial x},~B(t)  = \frac{\partial f(x^\star(t),u^\star(t))}{\partial u},\\
\phi(t,\tilde{x},\tilde{u}) & =  \left(\frac{\partial f(x'(t),u'(t))}{\partial x}-\frac{\partial f(x^\star(t),u^\star(t))}{\partial x}\right)\tilde{x} \nonumber\\
& \phantom{=}+ \left(\frac{\partial f(x'(t),u'(t))}{\partial u}-\frac{\partial f(x^\star(t),u^\star(t))}{\partial u}\right)\tilde{u}.
    \end{align}
\end{subequations}
We can follow an analogous procedure to obtain that the output mapping $h$ satisfies the following equation:
\begin{subequations}\label{eq:mean_value_h}
\begin{equation}
    h(x)-h(x^\star(t))=C(t)\tilde{x}+\psi(t,\tilde{x})=:\tilde{y},
\end{equation}
where
\begin{equation}
    C(t)  = \frac{\partial h(x^\star(t))}{\partial x},~
    \psi(t,\tilde{x})  = \left(\frac{\partial h(x'(t))}{\partial x}-\frac{\partial h(x^\star(t))}{\partial x}\right)\tilde{x}.
\end{equation}
\end{subequations}
Then, with \eqref{eq:mean_value_f} and \eqref{eq:mean_value_h}, we obtain the following equivalent representation of the system \eqref{eq:error_dynamics} around the nominal periodic motion $(u^\star(t),x^\star(t),y^\star(t))$:
\begin{subequations}\label{eq:errorF_short}
    \begin{align}
\dot{\tilde x} & = A(t)\tilde{x} + B(t)\tilde{u} + \phi(t,\tilde{x},\tilde{u}),\\
\tilde{y} & = C(t)\tilde{x}+\psi(t,\tilde{x}). 
\end{align}
\end{subequations}
Since $f$ and $h$ satisfy Assumption~\ref{assum:f} and $(u^\star(t),x^\star(t),y^\star(t))$ is a periodic motion, $A(t)$, $B(t)$, and $C(t)$ are time-periodic matrices.

The following result provides a sufficient condition for local harmonic stability of the system~\eqref{eq:system} by using the representation~\eqref{eq:errorF_short}.
\begin{proposition}\label{thm:lin_to_local_iss}
  Consider the system~\eqref{eq:system} satisfying Assumptions \ref{assum:f}, 
  and \ref{ass:periodic_trajectory}. Then, the system \eqref{eq:system} is locally harmonically stable  with respect to the nominal periodic motion $(u^\star(t),x^\star(t), y^\star(t))$ in the sense of Definition~\ref{def:HS} if there exists a continuous, positive-definite symmetric matrix $P(t)$ satisfying $c_1I_{n} \preceq  P(t) \preceq  c_2I_{n}$ $\forall t\geq 0$, for some positive constants $c_1$ and $c_2$, as well as positive scalars $\gamma$ and $\epsilon$, such that  the following \emph{time-varying} matrix inequality is satisfied $\forall t\geq 0$:
\begin{equation}\label{eq:P(t)}
\resizebox{0.98\columnwidth}{!}{$
\begin{bmatrix}
\dot{P}(t)+A^\top(t)P(t)+P(t)A(t)+C^\top(t)C(t)+\epsilon I_n &
P(t)B(t)\\
B^\top(t)P(t) &
-\gamma^2 I_m
\end{bmatrix}
\preceq 0.
$}
\end{equation}
\hfill $\square$
\end{proposition}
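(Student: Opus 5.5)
We take $V(t,\tilde x)=\tilde x^\top P(t)\tilde x$ as a Lyapunov function for the error dynamics~\eqref{eq:errorF_short}. By $c_1 I_n\preceq P(t)\preceq c_2 I_n$ we have $c_1\|\tilde x\|^2\le V\le c_2\|\tilde x\|^2$.

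Multiplying~\eqref{eq:P(t)} on both sides by $[\tilde x^\top\ \tilde u^\top]$ and its transpose gives
\begin{equation*}
\dot P\text{-part}:\quad \tilde x^\top(\dot P+A^\top P+PA)\tilde x+2\tilde x^\top PB\tilde u\le -\epsilon\|\tilde x\|^2-\|C\tilde x\|^2+\gamma^2\|\tilde u\|^2 .
\end{equation*}
Along~\eqref{eq:errorF_short} we therefore obtain
\begin{equation*}
\dot V\le -\epsilon\|\tilde x\|^2+\gamma^2\|\tilde u\|^2+2\tilde x^\top P(t)\phi(t,\tilde x,\tilde u).
\end{equation*}

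The key step, and the main obstacle, is bounding the remainder $\phi$ uniformly in $t$. By Assumption~\ref{assum:f} the Jacobians are continuous. The nominal motion is periodic and lies in $\mathrm{int}(D_x)\times\mathrm{int}(D_u)$, so its closure is compact and a tube $B_r(\{(x^\star(t),u^\star(t))\})$ fits inside $D_x\times D_u$. On a slightly larger compact set the Jacobians are uniformly continuous. Since $(x'(t),u'(t))$ lies on the segment between $(x,u)$ and the nominal point, for every $\kappa>0$ there is $r_\kappa>0$ such that $\|\tilde x\|+\|\tilde u\|<r_\kappa$ implies
\begin{equation*}
\|\phi\|\le\kappa(\|\tilde x\|+\|\tilde u\|)\quad\text{and}\quad \|\psi\|\le\kappa\|\tilde x\|,
\end{equation*}
uniformly in $t$. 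Periodicity (Assumption~\ref{ass:periodic_trajectory}) is what makes these bounds, and also the bounds on $A,B,C$, uniform. Strictly, the mean value theorem holds componentwise, but this does not affect the estimate.

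Inserting the bound and using Young's inequality gives
\begin{equation*}
\dot V\le-(\epsilon-2c_2\kappa-c_2\kappa)\|\tilde x\|^2+(\gamma^2+c_2\kappa)\|\tilde u\|^2 .
\end{equation*}
Choosing $\kappa$ with $3c_2\kappa\le\epsilon/2$ yields, on the ball of radius $r_\kappa$,
\begin{equation*}
\dot V\le-\tfrac{\epsilon}{2c_2}V+\bar\gamma\|\tilde u\|^2,\qquad \bar\gamma:=\gamma^2+c_2\kappa .
\end{equation*}

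The two parts of Definition~\ref{def:HS} now follow.
\begin{itemize}
\item[(a)] \emph{Uniform asymptotic stability.} With $\tilde u=0$ we get $V(t)\le V(t_0)e^{-\epsilon(t-t_0)/(2c_2)}$, hence local uniform exponential stability by \cite[Thm.~4.10]{khalil}. This implies local uniform asymptotic stability, and the bounds are independent of $t_0$ since $c_1,c_2,\epsilon$ are.
\item[(b)] \emph{Small-signal BIBO.} Comparison gives
\begin{equation*}
V(t)\le e^{-\epsilon(t-t_0)/(2c_2)}V(t_0)+\tfrac{2c_2\bar\gamma}{\epsilon}\|\tilde u\|_\infty^2 .
\end{equation*}
Choose $\|\tilde x(t_0)\|$ and $\|\tilde u\|_\infty<\epsilon_u$ small enough that the resulting bound $\|\tilde x(t)\|^2\le (c_2\|\tilde x(t_0)\|^2+2c_2\bar\gamma\epsilon_u^2/\epsilon)/c_1$ stays below $r_\kappa^2$. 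The standard forward-invariance argument then shows that trajectories never leave the region where the estimates hold, so $\tilde x$ is bounded. Finally $\|\tilde y\|\le(\sup_t\|C(t)\|+\kappa)\|\tilde x\|$ is bounded, which gives~\eqref{eq:local_bibo}.
\end{itemize}

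The delicate points are the uniform-in-$t$ smallness of $\phi$ and keeping the solution inside the region of validity. For the latter, BIBO must be interpreted for initial states near the nominal motion, and we would state this explicitly.
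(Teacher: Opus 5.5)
Your proof is correct and follows essentially the same route as the paper. The common steps are the Lyapunov function $V=\tilde x^\top P(t)\tilde x$, the quadratic-form bound read off from \eqref{eq:P(t)}, and the same Young's-inequality estimate of the remainder, which yields the coefficient $\epsilon-3c_2\kappa$. The only difference is the final step: you use the comparison lemma to get an explicit exponential, ISS-type bound plus a forward-invariance argument, while the paper invokes \cite[Theorem~4.18]{khalil} and \cite[Theorem~4.8]{khalil}. Your version also makes explicit two points the paper leaves implicit: that the remainder bounds hold uniformly in $t$, and that the input bound must be shrunk relative to the admissible state radius.
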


\begin{IEEEproof}
See  Appendix~\ref{appendix_1}.
\end{IEEEproof}

\vspace{-0.3cm}
\subsection{LTP and HSS Representations}

A direct solution of the LMI~\eqref{eq:P(t)} is computationally challenging due to the time-periodic nature of the system matrices and the Lyapunov matrix function $P(t)$. To obtain a numerically more tractable formulation, we next employ an HSS representation for \eqref{eq:system} associated with the linearization of~\eqref{eq:errorF_short} around the nominal periodic motion $(u^\star(t),x^\star(t),y^\star(t))$. 
Thus, the approach pursued in the sequel is similar to those in \cite{wereley1990analysis,becker2024}, but with the important difference that we link stability properties of the HSS representation to the harmonic stability of the nonlinear CBPS \eqref{eq:system} in the sense of Definition~\ref{def:HS}.

For that purpose, due to Assumption~\ref{assum:f}, the nonlinear remainder terms in \eqref{eq:errorF_short} satisfy
\begin{subequations}
\begin{align}
\lim_{\|(\tilde x,\tilde u)\|\to0}
\sup_{t\ge0}
\frac{\|\phi(t,\tilde x,\tilde u)\|}
     {\|(\tilde x,\tilde u)\|}
= 0, \quad 
\lim_{\|\tilde x\|\to0}
\sup_{t\ge0}
\frac{\|\psi(t,\tilde x)\|}
     {\|\tilde x\|}
&= 0.
\end{align}
\end{subequations}
Hence, for sufficiently small $\|\tilde{x}\|$, $\|\tilde{u}\|$, we see from \eqref{eq:errorF_short} that the  dynamics~\eqref{eq:error_dynamics} admit the following LTP approximation:
\begin{subequations}\label{eq:linearized}
\begin{align}
\dot{\tilde x}(t)
&= A(t)\tilde x(t) + B(t)\tilde u(t), \\
\tilde y(t)
&= C(t)\tilde x(t).
\end{align}
\end{subequations}

Next, we follow the framework presented in~\cite{wereley1990analysis,DeRua2020}
to derive the exact HSS representation of the LTP system~\eqref{eq:linearized}. Recall that, by Assumptions~\ref{assum:f} and~\ref{ass:periodic_trajectory}, the matrices $A(t)$, $B(t)$, and $C(t)$ are continuous and $T$-periodic with common period $T:=\mathrm{LCM}(T_1,T_2,T_3)$, where $\mathrm{LCM}$ denotes the least common multiple~\cite{wereley1990analysis,liao2022}. Let $\omega_g:=2\pi/T$ denote the angular frequency associated with the common period $T$. Then, the periodic coefficient matrices in~\eqref{eq:linearized} admit the following Fourier-series representations~\cite{Kwon2017HarmonicSS,DeRua2020}:

\begin{subequations}\label{eq:Fourier_coefficients}
\small
\begin{align}
A(t) &= \lim_{N\to\infty}\sum_{k=-N}^{N} A_k e^{jk\omega_g t},
~
B(t) = \lim_{N\to\infty}\sum_{k=-N}^{N} B_k e^{jk\omega_g t}, \\
C(t) &= \lim_{N\to\infty}\sum_{k=-N}^{N} C_k e^{jk\omega_g t},
\end{align}
\end{subequations}

where $A_k$, $B_k$, and $C_k$ denote the corresponding complex harmonic coefficients of $A(t)$, $B(t)$, and $C(t)$, respectively.
Similarly, the state, input, and output trajectories of the LTP system~\eqref{eq:linearized} can be expressed as
\begin{subequations}\label{eq:sliding_sum}
\small
\begin{align}
\tilde x(t)
&=
\lim_{N\to\infty}
\sum_{k=-N}^{N}
\tilde X_k(t)e^{jk\omega_g t},
~
\tilde u(t)
=
\lim_{N\to\infty}
\sum_{k=-N}^{N}
\tilde U_k(t)e^{jk\omega_g t},
\\
\tilde y(t)
&=
\lim_{N\to\infty}
\sum_{k=-N}^{N}
\tilde Y_k(t)e^{jk\omega_g t},
\end{align}
\end{subequations}
where we note that the corresponding complex harmonic coefficients are time-dependent.

By considering \eqref{eq:Fourier_coefficients} and \eqref{eq:sliding_sum}, the LTP system~\eqref{eq:linearized} can be equivalently represented by the following LTI dynamics in HSS form~\cite{wereley1990analysis,mollerstedt2000,DeRua2020}:
\begin{equation}\label{eq:HSS_LTI_infinite_complex}
\begin{aligned}
\dot{\tilde{\mathcal X}}(t)
&=
(\mathcal A-\mathcal N)\tilde{\mathcal X}(t)
+
\mathcal B\,\tilde{\mathcal U}(t), \\
\tilde{\mathcal Y}(t)
&=
\mathcal C\,\tilde{\mathcal X}(t),
\end{aligned}
\end{equation}
where

{\small
\begin{equation}\label{eq:HSS_vectors}
\tilde{\mathcal X}(t)
:=
\lim_{N\to\infty}
\big(
\tilde X_{-N}(t),
\ldots,
\tilde X_{-1}(t),
\tilde X_{0}(t),
\tilde X_{1}(t),
\ldots,
\tilde X_{N}(t)
\big)^\top
\end{equation}}

with
\begin{equation}
\mathcal A
=
\lim_{N\to\infty}
\begin{bmatrix}
A_0      & A_{-1}   & \cdots & A_{-2N} \\
A_1      & A_0      & \ddots & \vdots \\
\vdots   & \ddots   & \ddots & A_{-1} \\
A_{2N}   & \cdots   & A_1    & A_0
\end{bmatrix},
\end{equation}
i.e. $\mathcal A$ has an infinite-dimensional block-Toeplitz structure,  and $\mathcal{B}$ and $\mathcal{C}$ are defined analogously to $\mathcal{A}$.
Moreover, 
\begin{equation}
\mathcal N
=
\lim_{N\to\infty}
\mathrm{blkdiag}
\left(
[-jN\omega_g I_n,\ldots,\mathbf 0_{n\times n},\ldots,jN\omega_g I_n]
\right)
\end{equation}
and the vectors $\tilde {\mathcal U}(t)$ and $\tilde {\mathcal Y}(t)$ are defined analogously to $\tilde {\mathcal X}(t)$.

The HSS realization~\eqref{eq:HSS_LTI_infinite_complex} provides an  LTI representation of the LTP dynamics~\eqref{eq:linearized}. Such a representation, as well as its associated harmonic transfer function (HTF)~\cite{wereley_GNC1990,Mollerstedt2000out_of_control,liao2022}, 
\begin{equation}\label{eq:HTF}
\mathcal G(s)
=
\mathcal C
\bigl(
sI-\mathcal A+\mathcal N
\bigr)^{-1}
\mathcal B,
\end{equation}
form the basis of several stability analysis methods, including Nyquist-, eigenvalue-, and Lyapunov-based approaches~\cite{wereley1990analysis,Zhou2004HarmonicStateOperator,salis2018,wang2019,DeRua2020}. 

\vspace{-0.2cm}
\subsection{A Tractable Approach to Harmonic Stability Assessment\label{sec:HS_practical_approach_LTP}}

In general, the  HSS (or the HTF) representation~\eqref{eq:HSS_LTI_infinite_complex} of an LTP system is infinite-dimensional\cite{wereley1990analysis, Bittanti1991}. 
In practice, however, finite-dimensional approximations are typically employed to obtain models suitable for numerical computations, stability analysis, and control design of CBPSs~\cite{mollerstedt2000,Salis2017,DeRua2020,yang2021,liao2022}.  Such finite-dimensional approximations of the HSS dynamics are associated with truncated representations of $(A(t),B(t),C(t))$ in \eqref{eq:Fourier_coefficients}  and  $(\tilde x(t),\tilde{u}(t),\tilde{y}(t))$ in \eqref{eq:sliding_sum},  obtained by considering only a finite set of harmonics. 

In this spirit, let $N\in \mathbb{N}$ be finite. Then, we obtain the following finite-dimensional HSS dynamics from ~\eqref{eq:HSS_LTI_infinite_complex}:
\begin{subequations}\label{eq:HSS_finite_complex}
\begin{align}
    \dot{\bm{\tilde X}}(t)
& =
(\bm{\mathcal A} - \bm{\mathcal N})\bm{\tilde X}(t)
+
\bm{\mathcal B}\,\bm{\tilde U}(t),\\
\bm{\tilde Y}(t)
&=
\bm{\mathcal C}\,\bm{\tilde X}(t),
\end{align}
\end{subequations}
where $(\bm{\mathcal A},\bm{\mathcal{B}},\bm{\mathcal{C}},\bm{\mathcal{N}})$ and $(\bm{\tilde X,\bm{\tilde{U},\bm{\tilde{Y}}}})$ are finite-dimensional analogous quantities of $(\mathcal{A},\mathcal{B},\mathcal{C},\mathcal{N})$ and $(\tilde{\mathcal X},\tilde{\mathcal U},\tilde{\mathcal Y})$ in \eqref{eq:HSS_LTI_infinite_complex}, respectively. 
To continue with our analysis,  we note that the HSS realization \eqref{eq:HSS_finite_complex} is complex-valued due to the complex Fourier coefficients and the frequency-shift operator $\bm{\mathcal N}$. Therefore, in contrast to \cite{wereley1990analysis,Zhou2004HarmonicStateOperator,salis2018,wang2019,DeRua2020}, we at first formulate~\eqref{eq:HSS_finite_complex} equivalently in the real domain, so that our subsequent analysis can rely on standard control-systems methods (cf.~\cite{Doria2026}). Let
\begin{subequations}\label{eq:HSS_real_img_state_decomposition}
\begin{align}
    \bm{\tilde X}(t)
&=
\bm{\tilde X}_{\mathrm r}(t)
+
j\bm{\tilde X}_{\mathrm i}(t),
~
\bm{\tilde U}(t)
&=
\bm{\tilde U}_{\mathrm r}(t)
+
j\bm{\tilde U}_{\mathrm i}(t),
\\
\bm{\tilde Y}(t)
&=
\bm{\tilde Y}_{\mathrm r}(t)
+
j\bm{\tilde Y}_{\mathrm i}(t),
\end{align}
\end{subequations}
and denote
$A_{\mathrm r}
:=
\Re(\bm{\mathcal A} - \bm{\mathcal N})$, 
$A_{\mathrm i}
:=
\Im(\bm{\mathcal A} - \bm{\mathcal N})$,  $B_{\mathrm r}
:=
\Re(\bm{\mathcal B})$, $B_{\mathrm i}
:=
\Im(\bm{\mathcal B})$, $C_{\mathrm r}
:=
\Re(\bm{\mathcal C})$ and $C_{\mathrm i}
:=
\Im(\bm{\mathcal C})$, so that 
\begin{equation}\label{eq:HSS_real_img_matrices_decomposition}
\bm{\mathcal A} - \bm{\mathcal N}
=
A_{\mathrm r}
+
jA_{\mathrm i},
\quad
\bm{\mathcal B}
=
B_{\mathrm r}
+
jB_{\mathrm i},
\quad
\bm{\mathcal C}
=
C_{\mathrm r}
+
jC_{\mathrm i}.
\end{equation}
Substituting the above decompositions into~\eqref{eq:HSS_finite_complex} and separating real and imaginary parts yields the following equivalent real-valued LTI system:
\begin{subequations}\label{eq:HSS_real_stacked}
\begin{equation}
\begin{aligned}
\dot{\bar{\bm X}}(t) & 
=
\bar{\mathcal A}\bar{\bm X}(t)
+
\bar{\mathcal B}\bar{\bm U}(t),\\
\bar{\bm Y}(t)
 & =
\bar{\mathcal C}\bar{\bm X}(t),
\end{aligned}
\end{equation}
where
\begin{equation}
\bar{\bm X}(t)
:=
\begin{bmatrix}
\bm{\tilde X}_{\mathrm r}(t)\\
\bm{\tilde X}_{\mathrm i}(t)
\end{bmatrix},
~
\bar{\bm U}(t)
:=
\begin{bmatrix}
\bm{\tilde U}_{\mathrm r}(t)\\
\bm{\tilde U}_{\mathrm i}(t)
\end{bmatrix},
~
\bar{\bm Y}(t)
:=
\begin{bmatrix}
\bm{\tilde Y}_{\mathrm r}(t)\\
\bm{\tilde Y}_{\mathrm i}(t)
\end{bmatrix},
\end{equation}
and 
\begin{equation}\label{eq:HSS_real_matrices}
\bar{\mathcal A}
=
\begin{bmatrix}
A_{\mathrm r} & -A_{\mathrm i}\\
A_{\mathrm i} & A_{\mathrm r}
\end{bmatrix},
~
\bar{\mathcal B}
=
\begin{bmatrix}
B_{\mathrm r} & -B_{\mathrm i}\\
B_{\mathrm i} & B_{\mathrm r}
\end{bmatrix},
~
\bar{\mathcal C}
=
\begin{bmatrix}
C_{\mathrm r} & -C_{\mathrm i}\\
C_{\mathrm i} & C_{\mathrm r}
\end{bmatrix}.
\end{equation}
\end{subequations}

 We are in a position to state our main result, which establishes sufficient conditions for assessing
 harmonic stability of the nonlinear CBPS \eqref{eq:system} with respect to the nominal periodic motion $(u^\star(t),x^\star(t),y^\star(t))$ via the solution of finite-dimensional matrix inequalities.

\begin{assumption}
\label{ass:the1}
Consider the system~~\eqref{eq:HSS_real_stacked}.
   There exist scalars $N_0 \in \mathbb{N}$, $\epsilon > 0$ and $\gamma>0$, such that for any $N \ge N_0$ there exist block-Toeplitz matrices $P_r = P_r^\top \succ 0$ and $P_i = -P_i^\top$, depending on $N$, such that the following \emph{time-invariant} matrix inequalities hold:
\begin{subequations}
\label{eq:HSS_BR_LMI}
\begin{align}
\begin{bmatrix}
\bar{\mathcal A}^\top \bar{\mathcal P}
+
\bar{\mathcal P}\bar{\mathcal A}
+
\bar{\mathcal C}^\top \bar{\mathcal C}   + \epsilon I_{2N}
&
\bar{\mathcal P} \bar{\mathcal B}
\\
\bar{\mathcal B}^\top \bar{\mathcal P}
&
-\gamma^2 I_{2N}
\end{bmatrix}
 & \preceq 0,\\
 \bar \cP = \begin{bmatrix}
     P_r & -P_i\\P_i & P_r
 \end{bmatrix} &  \succ 0.
\end{align}
\end{subequations} 
\hfill $\square$
\end{assumption}
\begin{theorem}\label{thm:practical_harmonic_stab}
 Consider the system~\eqref{eq:system} subject to Assumptions~\ref{assum:f} and \ref{ass:periodic_trajectory} as well as its corresponding HSS error dynamics~\eqref{eq:HSS_real_stacked} defined with respect to the nominal periodic motion $(u^\star(t),x^\star(t),y^\star(t))$. 
Then, with Assumption~\ref{ass:the1}, the  system~\eqref{eq:system} is harmonically stable with respect to $(u^\star(t),x^\star(t),y^\star(t))$ in the sense of Definition~\ref{def:HS}.
\hfill $\square$
\end{theorem}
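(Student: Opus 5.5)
The plan is to reduce Theorem~\ref{thm:practical_harmonic_stab} to Proposition~\ref{thm:lin_to_local_iss}. From the constant HSS certificate $\bar{\mathcal P}$ of Assumption~\ref{ass:the1}, I will build a continuous, uniformly bounded, $T$-periodic matrix function $P(t)$ that satisfies the time-varying inequality~\eqref{eq:P(t)}, and then invoke the proposition.

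\textbf{Step 1: the candidate $P(t)$.} Since $\bar{\mathcal P}$ has the real form of the Hermitian block-Toeplitz matrix $\bm{\mathcal P}=P_r+jP_i$, its blocks $P_{k-l}$ are Fourier coefficients. I set
\[
P_N(t)=\sum_{|k|\le 2N}P_k e^{jk\omega_g t}.
\]
The Toeplitz structure together with $P_r=P_r^\top$ and $P_i=-P_i^\top$ gives $P_{-k}=P_k^\star$, so $P_N(t)$ is real and symmetric. It is also continuous, since it is a trigonometric polynomial.

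\textbf{Step 2: the dictionary between time domain and HSS domain.} I will use the standard correspondence in the Wereley framework~\cite{wereley1990analysis}. Products of $T$-periodic matrices map to products of Toeplitz operators. Time differentiation $\dot P$ maps to the commutator $[\bm{\mathcal N},\bm{\mathcal P}]$, which is absorbed by writing $\bm{\mathcal A}-\bm{\mathcal N}$. Hence the complex form of the LMI in Assumption~\ref{ass:the1} is exactly the Toeplitz representation of the matrix in~\eqref{eq:P(t)}, up to truncation. The real embedding~\eqref{eq:HSS_real_matrices} is a unitary similarity of the complex Hermitian inequality, so the real LMI is equivalent to the complex one.

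\textbf{Step 3: from operator negativity to pointwise negativity.} A Hermitian Toeplitz operator whose symbol is a continuous periodic matrix $M(t)$ is negative semidefinite iff $M(t)\preceq 0$ for all $t$. For the bounds $c_1I\preceq P(t)\preceq c_2I$, I will use the eigenvalue bounds on the finite sections $\bar{\mathcal P}$ and pass them to the symbol in the same way, via Szegő/Toeplitz-section asymptotics.

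\textbf{Main obstacle.} Assumption~\ref{ass:the1} gives only finite sections, and the certificate varies with $N$. Products of truncated Toeplitz matrices differ from the truncation of the product by "corner" terms that involve the high harmonics of $A,B,C$. My approach is:
\begin{itemize}
\item Use the uniformity of $\epsilon$ and $\gamma$ in $N\ge N_0$ to normalize $\bar{\mathcal P}$ (the LMI is homogeneous apart from the $C^\top C$ and $\epsilon$ terms).
\item Extract a subsequence $P_N(t)\to P(t)$, using equicontinuity that comes from the boundedness of the Fourier coefficients of $A$ and $C$.
\item Show that the corner errors vanish because the Fourier coefficients of the continuous $A,B,C$ decay. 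Any residual is then absorbed by slack, replacing $\epsilon$ by $\epsilon/2$.
\end{itemize}
I will test the pointwise inequality at time $t$ with vectors of the form $(e^{jk\omega_g t}v)_{|k|\le N}/\sqrt{2N+1}$. As $N\to\infty$ these localize the Toeplitz quadratic form to $v^\star M(t)v$, which yields~\eqref{eq:P(t)} with $\epsilon/2$.

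\textbf{Conclusion.} With $P(t)$ constructed, Proposition~\ref{thm:lin_to_local_iss} gives local uniform asymptotic stability and small-signal BIBO stability of~\eqref{eq:error_dynamics}, i.e., harmonic stability in the sense of Definition~\ref{def:HS}.
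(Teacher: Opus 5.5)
Your skeleton matches the paper's: build a periodic $P(t)$ satisfying \eqref{eq:P(t)} from the HSS certificates, then invoke Proposition~\ref{thm:lin_to_local_iss}. The paper passes to $N\to\infty$ in one line and cites a Toeplitz/periodic-matrix correspondence (Blin et al.). You try to justify that passage, and the justification fails at its crux.

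\textbf{The missing uniform bound.} Your argument needs $\sup_{N\ge N_0}\|\bar{\mathcal P}_N\|<\infty$. Assumption~\ref{ass:the1} does not give this, and your normalization cannot produce it. The LMI is not homogeneous: dividing it by $\rho_N=\|\bar{\mathcal P}_N\|$ replaces $(\epsilon,\gamma^2)$ by $(\epsilon/\rho_N,\gamma^2/\rho_N)$. So the fixed slack into which you later absorb errors vanishes if $\rho_N\to\infty$. Without the bound there is no compactness. The equicontinuity you invoke also has no source, since bounded Fourier coefficients of $A$ and $C$ say nothing about the modulus of continuity of $P_N$.

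\textbf{The corner terms.} They do not vanish by coefficient decay. Write $\mathrm{Toep}_N(f)$ for the $N$-th section. Then $\mathrm{Toep}_N(fg)-\mathrm{Toep}_N(f)\mathrm{Toep}_N(g)$ couples indices near $\pm N$ to their neighbours outside through \emph{low-order} harmonics. For $f=e^{-j\omega_g t}$ and $g=e^{j\omega_g t}$ it is a rank-one matrix of norm $1$ for every $N$. Against your spread test vectors these terms are $o(1)\cdot\|\bar{\mathcal P}_N\|$, which is small only under the missing bound.

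\textbf{Step~3 is false as stated.} Eigenvalue bounds on a finite Toeplitz section do not pass to the raw symbol. Take $n=1$, $N=1$, $p_0=1$, $p_{\pm1}=0$, $p_{\pm2}=0.9$. The $3\times3$ section has eigenvalues $0.1,1,1.9$, yet the symbol $1+1.8\cos(2\omega_g t)$ is negative somewhere. Szeg\H{o}-type limits concern a fixed symbol, not one that changes with $N$.

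\textbf{What your test vectors actually control.} With $w(v)=(e^{-jk\omega_g t}v)_{|k|\le N}/\sqrt{2N+1}$, they control the Fej\'er mean $\hat P_N(t)=\sum_{|k|\le 2N}\bigl(1-\tfrac{|k|}{2N+1}\bigr)P_ke^{jk\omega_g t}$. Indeed $w(v)^\star\bm{\mathcal P}w(x)=v^\star\hat P_N(t)x$ and $w(v)^\star(\bm{\mathcal N}\bm{\mathcal P}-\bm{\mathcal P}\bm{\mathcal N})w(v)=v^\star\dot{\hat P}_N(t)v$.

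\textbf{A correct argument once the missing hypothesis is added.}
\begin{enumerate}
\item The $(1,1)$ block, the skew-Hermitian $\bm{\mathcal N}$, and $\|\bm{\mathcal A}\|\le a:=\sup_t\|A(t)\|$ already give $\bm{\mathcal P}\succeq\tfrac{\epsilon}{2a}I$.
\item Now assume also $\bm{\mathcal P}\preceq c_2I$ uniformly in $N$. Then $\mathrm{Toep}_N(A)w(v)=w(A(t)v)+o(1)$ uniformly in $t$, by the Fej\'er kernel and uniform continuity of $A$; the same holds for $B$ and $C$.
\item Test the LMI with $(w(v),w(u))$. For one large $N$, the trigonometric polynomial $\hat P_N$ satisfies \eqref{eq:P(t)} with $\epsilon/2$ and a slightly larger $\gamma$, and $\tfrac{\epsilon}{2a}I\preceq\hat P_N(t)\preceq c_2I$. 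No subsequence or limit is needed.
\end{enumerate}

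A minor point: $P_r=P_r^\top$ and $P_i=-P_i^\top$ only give $P_{-k}=P_k^\star$, i.e.\ a Hermitian, not necessarily real, symbol. Take real parts at the end.

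The paper's one-line limit tacitly presumes the same uniform control.
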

\begin{IEEEproof}
See Appendix~\ref{appendix_2}.
\end{IEEEproof}

Often, in power systems applications, the relevant set of harmonics might be finite. Then, the resulting HSS representation is also directly finite dimensional, and the stability condition in Assumption~\ref{ass:the1} has only to be verified for a single, fixed $N_0=N$.
This leads to the following corollary to Theorem~\ref{thm:practical_harmonic_stab}
for the presentation of which, we introduce the sets 
\begin{equation}
  \begin{split}
     \cU_{N_2} &:= \{\tilde u(t) \mid \tilde u(t) = \sum_{k=-N_2}^{N_2}\tilde U_k(t) e^{j k \omega_g t}\},\\
     \cX_{N_3} &: = \{\tilde x(t) \mid \tilde x(t) = \sum_{k=-N_3}^{N_3}\tilde X_k(t) e^{j k \omega_g t}\}, 
  \end{split} 
  \notag
\end{equation}
with $N_2\in \mathbb{N}$ and $N_3\in \mathbb{N}$, together with the assumption below.
\begin{assumption}
\label{ass:cor}
Consider the system \eqref{eq:system} under Assumptions \ref{assum:f} and  \ref{ass:periodic_trajectory}.
The nominal periodic motion $(u^\star(t),x^\star(t),y^\star(t))$ has a finite set of harmonics of order $N_1\in \mathbb{N}$. Moreover, for all  $\tilde u(t) \in \cU_{N_2}$, there exists an $N_3$, such that $\tilde x(t) \in \cX_{N_3}$.
\hfill $\square$
\end{assumption}
\smallskip
\begin{corollary}\label{corollary:1}
    Consider the system \eqref{eq:system} under Assumptions \ref{assum:f},~\ref{ass:periodic_trajectory} and \ref{ass:cor} as well as its corresponding HSS error dynamics~\eqref{eq:HSS_real_stacked} defined with respect to the nominal periodic motion $(u^\star(t),x^\star(t),y^\star(t))$. 
   Suppose that Assumption~\ref{ass:the1} holds for $N_0=N = \max\{N_1,N_2,N_3\}$. Then, the system~\eqref{eq:system} is harmonically stable with respect to $(u^\star(t),x^\star(t),y^\star(t))$ in the sense of Definition~\ref{def:HS}.
\hfill $\square$
\end{corollary}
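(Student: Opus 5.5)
The plan is to reduce Corollary~\ref{corollary:1} to Proposition~\ref{thm:lin_to_local_iss}. The idea is to show that, when all relevant signals lie in finitely many harmonics, the constant HSS certificate $\bar{\mathcal P}$ from Assumption~\ref{ass:the1} at $N=\max\{N_1,N_2,N_3\}$ produces a continuous $T$-periodic matrix $P(t)$ satisfying \eqref{eq:P(t)}. The argument mirrors Theorem~\ref{thm:practical_harmonic_stab}, but the finite-harmonic structure replaces the limit $N\to\infty$ by an exact identity at a single $N$.

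\textbf{Step 1: the finite HSS model is exact.} By Assumption~\ref{ass:cor}, $x^\star,u^\star$ contain harmonics only up to order $N_1$. Hence the Jacobians $A(t),B(t),C(t)$ are trigonometric polynomials of bounded order. This requires $f,h$ to map finite-harmonic signals to finite-harmonic ones; I would state this as implicit in the assumption, which is the weak point of the argument. For $\tilde u\in\cU_{N_2}$ we also have $\tilde x\in\cX_{N_3}$. With $N=\max\{N_1,N_2,N_3\}$, no harmonic content outside $[-N,N]$ is generated or lost, so the truncated system \eqref{eq:HSS_finite_complex} coincides with \eqref{eq:HSS_LTI_infinite_complex} on these signals. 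Its real form \eqref{eq:HSS_real_stacked} is then an exact, finite-dimensional, equivalent description of the LTP system \eqref{eq:linearized}. I expect this to be the main obstacle, because the convolution $A(t)\tilde x(t)$ can in principle produce harmonics up to order $N_1+N_3$. I would handle it by reading Assumption~\ref{ass:cor} as closure: the products $A\tilde x$ are absorbed because $\tilde x$ stays in $\cX_{N_3}$, and the components beyond order $N$ vanish identically.

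\textbf{Step 2: build $P(t)$ from $\bar{\mathcal P}$.} Set $\mathcal P=P_r+jP_i$. This matrix is Hermitian because $P_r$ is symmetric and $P_i$ is skew-symmetric, and it is positive definite because $\bar{\mathcal P}\succ0$. Since $\mathcal P$ is block-Toeplitz, its blocks define Fourier coefficients $P_k$ of a real $T$-periodic matrix $P(t)=\sum_{|k|\le N}P_k e^{jk\omega_g t}$. The standard HSS identities then apply: the lifting of $\dot P+A^\top P+PA$ is $(\mathcal A-\mathcal N)^\star\mathcal P+\mathcal P(\mathcal A-\mathcal N)$, where $\mathcal N$ accounts for $\dot P$, and the lifting of products of periodic matrices is the product of their Toeplitz matrices. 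The realification map $M\mapsto\begin{bmatrix}\Re M&-\Im M\\ \Im M&\Re M\end{bmatrix}$ is a $*$-homomorphism preserving definiteness. Combining these, the LMI \eqref{eq:HSS_BR_LMI} states that the Toeplitz lifting of the left-hand side of \eqref{eq:P(t)} is negative semidefinite. The $\epsilon I$ and $\gamma^2 I$ blocks lift to identities.

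\textbf{Step 3: return to time domain and conclude.} A periodic Hermitian matrix function whose Toeplitz operator is $\preceq0$ is pointwise $\preceq0$ for all $t$. To see this, test with the signals $v e^{jk\omega_g t}$ restricted to the finite harmonic class, which is exact by Step 1. Similarly, $\mathcal P\succeq c_1 I$ gives $c_1I\preceq P(t)\preceq c_2 I$, with $c_1=\lambda_{\min}(\bar{\mathcal P})$ and $c_2$ bounded by $\sum_k\|P_k\|$. Continuity of $P(t)$ is immediate because it is a trigonometric polynomial. Thus every hypothesis of Proposition~\ref{thm:lin_to_local_iss} holds, and that result gives small-signal BIBO stability together with local uniform asymptotic stability of the error dynamics \eqref{eq:error_dynamics}, i.e.\ harmonic stability in the sense of Definition~\ref{def:HS}.
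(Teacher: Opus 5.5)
\textbf{Verdict: genuine gap in Steps~2--3.} Your overall reduction is what the paper intends: an exact finite HSS model, then a time-varying certificate \eqref{eq:P(t)}, then Proposition~\ref{thm:lin_to_local_iss}. Reading Assumption~\ref{ass:cor} as a closure property in Step~1 is also exactly what the paper's one-line proof takes for granted. The failure is in Steps~2--3, where you translate Toeplitz statements into time-domain statements on a single finite section. Both facts you use there hold for bi-infinite (Laurent) operators but fail for finite sections. Write $\mathcal{T}_N(F)$ for the $(2N+1)$-block Toeplitz section built from the Fourier coefficients of a periodic $F$.

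\textbf{Products do not transfer.} $(\bm{\mathcal P}\bm{\mathcal A})_{ik}=\sum_{|l|\le N}P_{i-l}A_{l-k}$ omits the terms with $|l|>N$ that appear in the coefficient $(PA)_{i-k}$. So in general $\bm{\mathcal P}\bm{\mathcal A}\neq\mathcal{T}_N(PA)$; it is not even Toeplitz. The same holds for $\bm{\mathcal C}^\star\bm{\mathcal C}$ and $\bm{\mathcal P}\bm{\mathcal B}$. Only the $\bm{\mathcal N}$-terms are exact, reproducing $\mathcal{T}_N(\dot P)$. Hence \eqref{eq:HSS_BR_LMI} is not the section of the left-hand side of \eqref{eq:P(t)}.

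\textbf{Definiteness of one section is not pointwise definiteness.} For $w(t)=\sum_{|k|\le N}W_ke^{jk\omega_g t}$ you only get $W^\star\mathcal{T}_N(F)W=\frac{1}{T}\int_0^T w^\star F w\,dt$, which is an averaged statement. Take the scalar $F(t)=-1+1.2\cos(\omega_g t)$ with $N=1$. Then $\mathcal{T}_1(F)$ is tridiagonal with eigenvalues $-1$ and $-1\pm 1.2/\sqrt{2}$, all negative, yet $F(0)=0.2>0$. Flipping the sign of the constant shows that $\bar{\mathcal P}\succ 0$ does not give $P(t)\succ 0$ either. In fact $\lambda_{\min}(\mathcal{T}_N(P))\ge\min_t\lambda_{\min}(P(t))$, so your choice of $c_1$ goes the wrong way. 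Consequently the $P(t)$ you read off from the blocks of $\bar{\mathcal P}$ need not be a Lyapunov matrix or satisfy \eqref{eq:P(t)}, and Proposition~\ref{thm:lin_to_local_iss} cannot be invoked. Two smaller points: the blocks of a $(2N+1)$-block Toeplitz matrix run over $|k|\le 2N$, not $|k|\le N$. Also, Hermitian $\bm{\mathcal P}$ only makes $P(t)$ Hermitian, so you would need to take its real part.

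\textbf{Why exactness does not rescue this, and a repair.} Exactness of the HSS model is a statement about how the LTP dynamics act on signals. It does not change the algebra of finite Toeplitz sections, nor the value of $\tilde x^\top P(t)\tilde x$ at individual instants, which is what Proposition~\ref{thm:lin_to_local_iss} needs. This is why Theorem~\ref{thm:practical_harmonic_stab} requires the LMI for all $N\ge N_0$ and first passes to the operator inequality \eqref{eq:HSS_BR_LMI_infinite}. There, products and definiteness do transfer to the time domain via the Toeplitz-operator result used in Appendix~\ref{appendix_2}. The corollary's proof invokes exactness only to place itself in that setting, an identification it leaves implicit; it does no finite-section calculus.

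A self-contained single-$N$ route is to use exactness only dynamically:
\begin{itemize}
\item The $(1,1)$ block of \eqref{eq:HSS_BR_LMI}, together with $\bar{\mathcal P}\succ0$, makes $\bar{\mathcal A}$ Hurwitz.
\item Under exactness, every solution of \eqref{eq:linearized} with $\tilde u=\mathbf{0}_m$ therefore tends to zero. By Floquet theory the LTP system is then uniformly exponentially stable.
\item Take $P(t)$ as the bounded periodic solution of $\dot P+A^\top P+PA=-(C^\top C+2\epsilon I_n)$. It satisfies $c_1I_n\preceq P(t)\preceq c_2I_n$.
\item Choose $\gamma\ge\sup_t\|P(t)B(t)\|/\sqrt{\epsilon}$. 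Then \eqref{eq:P(t)} holds by a Schur complement, and Proposition~\ref{thm:lin_to_local_iss} applies.
\end{itemize}
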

\begin{IEEEproof}
The proof is straightforward by noting that under the corollary conditions, the finite-dimensional LTI approximation \eqref{eq:HSS_real_stacked} of \eqref{eq:linearized} is exact. 
\end{IEEEproof}

\subsection{Comparison with Respect to Existing Stability Characterizations}

The proposed harmonic stability assessment is directly related to existing HSS- and HTF-based stability methods~\cite{wereley1990analysis,mollerstedt2000,Zhou2004HarmonicStateOperator,salis2018,liao2022}. Indeed, the real-valued HSS realization~\eqref{eq:HSS_real_stacked} is equivalent to the conventional complex-valued realization~\eqref{eq:HSS_finite_complex}. Moreover, the corresponding state-matrix spectra satisfy~\cite{Doria2026} \begin{equation} \sigma(\bar{\mathcal{A}}) = \sigma(\bm{\mathcal{A}}-\bm{\mathcal{N}}) \cup \overline{\sigma(\bm{\mathcal{A}}-\bm{\mathcal{N}})}. \end{equation} Hence, $\bar{\mathcal A}$ is Hurwitz if and only if $\bm{\mathcal A}-\bm{\mathcal N}$ is Hurwitz, meaning that both realizations have the same internal stability properties. 
In addition, since the existence of a symmetric positive-definite matrix $\bar{\mathcal P}=\bar{\mathcal P}^{\top}\succ0$ and scalars $\epsilon>0$ and $\gamma>0$ satisfying~\eqref{eq:HSS_BR_LMI} is necessary and sufficient for $\bar{\mathcal A}$ to be Hurwitz it follows that under minimality of the HSS realization~\eqref{eq:HSS_real_stacked}, the generalized Nyquist criterion applied to the corresponding HTF is equivalent to the Hurwitz property of $\bm{\mathcal A}-\bm{\mathcal N}$~\cite{wereley_GNC1990,liao2022}. 

Thus, the proposed characterization through Theorem~\ref{thm:practical_harmonic_stab} (or Corollary~\ref{corollary:1}) is consistent with existing HSS- and HTF-based stability assessment methods, with the additional important feature that it rigorously links harmonic stability of the HSS representation~\eqref{eq:HSS_real_stacked} to the local harmonic stability of the nonlinear CBPS~\eqref{eq:system}.

\section{Application Example}\label{sec:case_study}
In this section, we demonstrate how local harmonic stability can be assessed through trajectory-based linearization combined with HSS analysis for an averaged nonlinear CBPS model of a three-phase grid-following voltage-source converter (VSC) system equipped with an LCL filter, a proportional-integral (PI) current controller implemented in the $dq$ frame, and a synchronous reference frame phase-locked loop (SRF-PLL).

\begin{figure*}[!t]
    \centering
    \includegraphics[width=\linewidth]{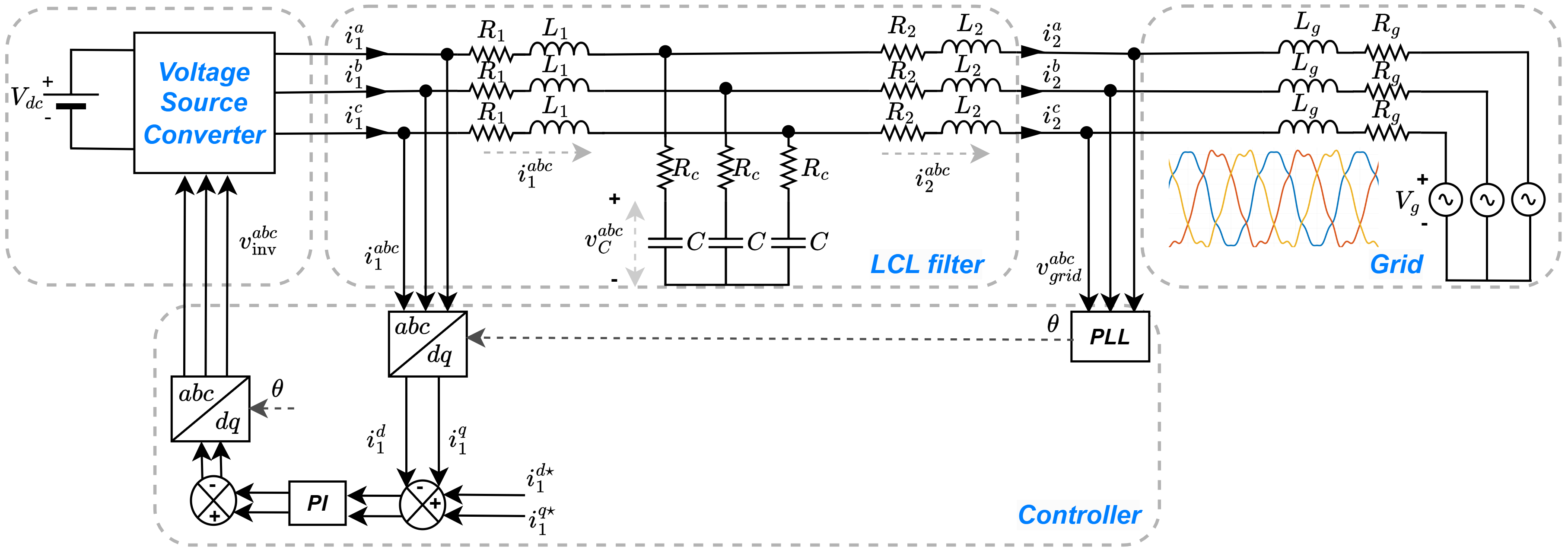}
    \caption{Averaged model of the considered grid-following converter, including the LCL filter, current controller, and grid.}
    \label{fig:GFL}
\end{figure*}
\vspace{-0.3cm}
\subsection{System Model}
In Fig.~\ref{fig:GFL}, we show the system setup considered. The dynamics are formulated in the stationary $abc$ reference frame and describe a balanced three-phase three-wire system. The electrical dynamics are given by
\begin{subequations}\label{eq:plant_model_EXAMPLE}
{\small
\begin{align}
L_1 \dot{i}_1^{abc} &=
v_{\mathrm{inv}}^{abc}
- v_C^{abc}
- R_1 i_1^{abc}
- R_c \bigl(i_1^{abc} - i_2^{abc}\bigr),
\\
C_f \dot{v}_C^{abc} &=
i_1^{abc} - i_2^{abc},
\\
(L_2 + L_g)\dot{i}_2^{abc}
&=
v_C^{abc}
+ R_c \bigl(i_1^{abc} - i_2^{abc}\bigr)
- v_{\mathrm{grid}}^{abc}
\nonumber\\
&\quad
- (R_2 + R_g)i_2^{abc},
\end{align}
}
\end{subequations}
where 
$i_1^{abc}$,
$i_2^{abc}$,
$v_C^{abc}$,
and
$v_{\mathrm{inv}}^{abc}$
denote the converter-side filter current, grid-side filter current, capacitor voltage, and converter-side control input voltage, respectively, all expressed in stationary $abc$ coordinates. Moreover, the grid voltage $v_{\mathrm{grid}}^{abc}(t)=:u(t)$ is treated as an external input to the converter~dynamics.

For control design and SRF-PLL implementation, the converter variables are transformed from the stationary $abc$ frame to the synchronous $dq$ frame using the Park transformation~\cite{Teodorescu2011_PLL}.

The corresponding transformations are denoted by $T_{dq}^{abc}: abc \rightarrow dq$ and $T_{abc}^{dq}: dq \rightarrow abc$. Using 
$T_{dq}^{abc}$, the $dq$ components of the grid voltage, capacitor voltage, and converter-side filter current are defined as
\begin{subequations}
\begin{align}
v_{dq}^{\mathrm{grid}}
=
\begin{bmatrix}
v_d^{\mathrm{grid}} \\
v_q^{\mathrm{grid}}
\end{bmatrix}
&=
T_{dq}^{abc}\,
v_{\mathrm{grid}}^{abc},\quad 
v_{dq}
=
\begin{bmatrix}
v_d \\
v_q
\end{bmatrix}
=
T_{dq}^{abc}\,
v_C^{abc},
\\
i_{dq}
=
\begin{bmatrix}
i_d \\
i_q
\end{bmatrix}
&=
T_{dq}^{abc}\,
i_1^{abc}.
\end{align}
\end{subequations}

Following the SRF-PLL structure commonly used in grid-connected converters~\cite{Chung2000,Teodorescu2011_PLL}, the PLL input is chosen as the $q$-axis component of the grid voltage, i.e., $v_q := v_q^{\mathrm{grid}}$. Then, the SRF-PLL dynamics can be described by
\begin{align}\label{eq:SFR_PLL_dynamics}
\dot{\xi}_{\mathrm{PLL}}
&= v_q,\qquad
\omega_{\mathrm{PLL}}
= \omega_g
+ k_{p,\mathrm{PLL}}v_q
+ k_{i,\mathrm{PLL}}\xi_{\mathrm{PLL}}.
\end{align}

For our purposes, it is more convenient to represent the PLL phase dynamics on the unit circle rather than directly integrating the phase angle $\theta$, since this formulation preserves the invariant manifold $z_c^2+z_s^2=1$. Furthermore, the equivalent unit-circle dynamics,
\begin{equation}\label{eq:unit_circle}
\dot{z}_c = -\omega_{\mathrm{PLL}}\, z_s, \quad 
\dot{z}_s = \phantom{-}\omega_{\mathrm{PLL}}\, z_c,
\end{equation}
where $z_c = \cos\theta,~z_s = \sin\theta.$ Then, the PI controller for regulating the converter-side filter current $i_1^{abc}$ is implemented in synchronous $dq$ coordinates
as follows~\cite{Teodorescu2011_PLL}:
\begin{subequations}\label{eq:control_system}
\begin{align}
\dot{\xi}_d &= i_d^\star - i_d, \\
v_d^{\mathrm{ctrl}} &=
v_d + k_p(i_d^\star - i_d)
+ k_i \xi_d
- \omega_{\mathrm{PLL}} L_1 i_q,
\\
\dot{\xi}_q &= i_q^\star - i_q, \\
v_q^{\mathrm{ctrl}} &=
v_q + k_p(i_q^\star - i_q)
+ k_i \xi_q
+ \omega_{\mathrm{PLL}} L_1 i_d.
\end{align}
By defining the overall closed-loop state vector as
\[
x =
\begin{bmatrix}
i_1^{abc},\;
i_2^{abc},\;
v_C^{abc},\;
\xi_d,\;
\xi_q,\;
\xi_{\mathrm{PLL}},\;
z_c,\;
z_s
\end{bmatrix}^{\top}
\in \mathbb{R}^{14},
\]
the converter control input can be expressed through the \emph{nonlinear} control law:
\begin{equation}\label{eq:benchmark_control_law}
v_{\mathrm{inv}}^{abc}
=
\gamma(x)
=
T_{abc}^{dq} v_{dq}^{\mathrm{ctrl}}.
\end{equation}
\end{subequations}
\noindent
treating the grid voltage $v_\mathrm{grid}^\mathrm{abc}(t)$ as the external input $u$, and considering the plant dynamics \eqref{eq:plant_model_EXAMPLE}, the SRF-PLL dynamics \eqref{eq:SFR_PLL_dynamics}, and the control law \eqref{eq:control_system}, the resulting closed-loop converter dynamics can be represented in the form of~\eqref{eq:system}. In the sequel, the corresponding closed-loop converter system is denoted by $\Sigma_{\mathrm{conv.cl}}$.  
The system and control parameters considered in the case study are summarized in Table~\ref{tab:converter_parameters}. 
\begin{table}[!htb]
\centering
\caption{System and Control Parameters.}
\label{tab:converter_parameters}
\renewcommand{\arraystretch}{1.2}
\begin{tabular}{l l}
\hline
\textbf{Parameter} & \textbf{Value} \\
\hline
DC link voltage $V_{\mathrm{dc}}$
& $800~\mathrm{V}$ \\

Filter inductances $(L_1,L_2)$
& $(1~\mathrm{mH},1~\mathrm{mH})$ \\

Grid-side inductance $L_g$
& $2~\mathrm{mH}$ \\

Filter capacitance $C_f$
& $20~\mu\mathrm{F}$ \\

Inductor resistances $(R_1,R_2)$
& $(0.1~\Omega,0.1~\Omega)$ \\

Grid resistance $R_g$
& $2~\Omega$ \\

Active damping resistance $R_c$
& $1~\Omega$ \\

Current PI gains $(k_p,k_i)$
& $(3,2500)$ \\

PLL gains $(k_{p,\mathrm{PLL}},k_{i,\mathrm{PLL}})$
& $(1,50)$ \\

$d$-axis current reference $i_d^\star$
& $5~\mathrm{A}$ \\

$q$-axis current reference $i_q^\star$
& $0~\mathrm{A}$ \\

Grid voltage magnitude $V_g$
& $230\sqrt{2}~\mathrm{V}$ \\

Grid frequency $f$
& $50~\mathrm{Hz}$ \\

Angular frequency $\omega_g$
& $2\pi \cdot 50~\mathrm{rad/s}$ \\
\hline
\end{tabular}
\end{table}

\subsection{Trajectory-Based Linearization and HSS Construction}

Let $\mathcal{K}\subset \mathbb{N}$ be given. Then, we represent possible stationary grid voltage signals $u^\star(t)$  as follows:
{\small
\begin{equation}
\label{eq:u_star_harm}
u^\star(t)=V_g
\begin{bmatrix}
\sin(\omega_g t)\\
\sin(\omega_g t-\tfrac{2\pi}{3})\\
\sin(\omega_g t+\tfrac{2\pi}{3})
\end{bmatrix}
+\!\!\sum_{k\in\mathcal K}V_{g,k}
\begin{bmatrix}
\sin(k\omega_g t)\\
\sin(k\omega_g t-\sigma_k\tfrac{2\pi}{3})\\
\sin(k\omega_g t+\sigma_k\tfrac{2\pi}{3})
\end{bmatrix},
\end{equation}
}
\noindent
where $V_g,V_{g,k},\omega_g>0$. Note that $u^\star(t)$ denotes the nominal voltage conditions of the grid perturbed by the given harmonics, which will allow us to analyze the dependence of the harmonic stability on the operating trajectory. For numerical experiments, we specifically consider harmonics of 3rd, 5th, and 7th orders with respective amplitudes of $0.03$, $0.05$, and $0.05$.

\begin{figure*}[!t]
\centering
\begin{subfigure}[t]{0.49\textwidth}
    \centering
    \includegraphics[width=0.9\linewidth]{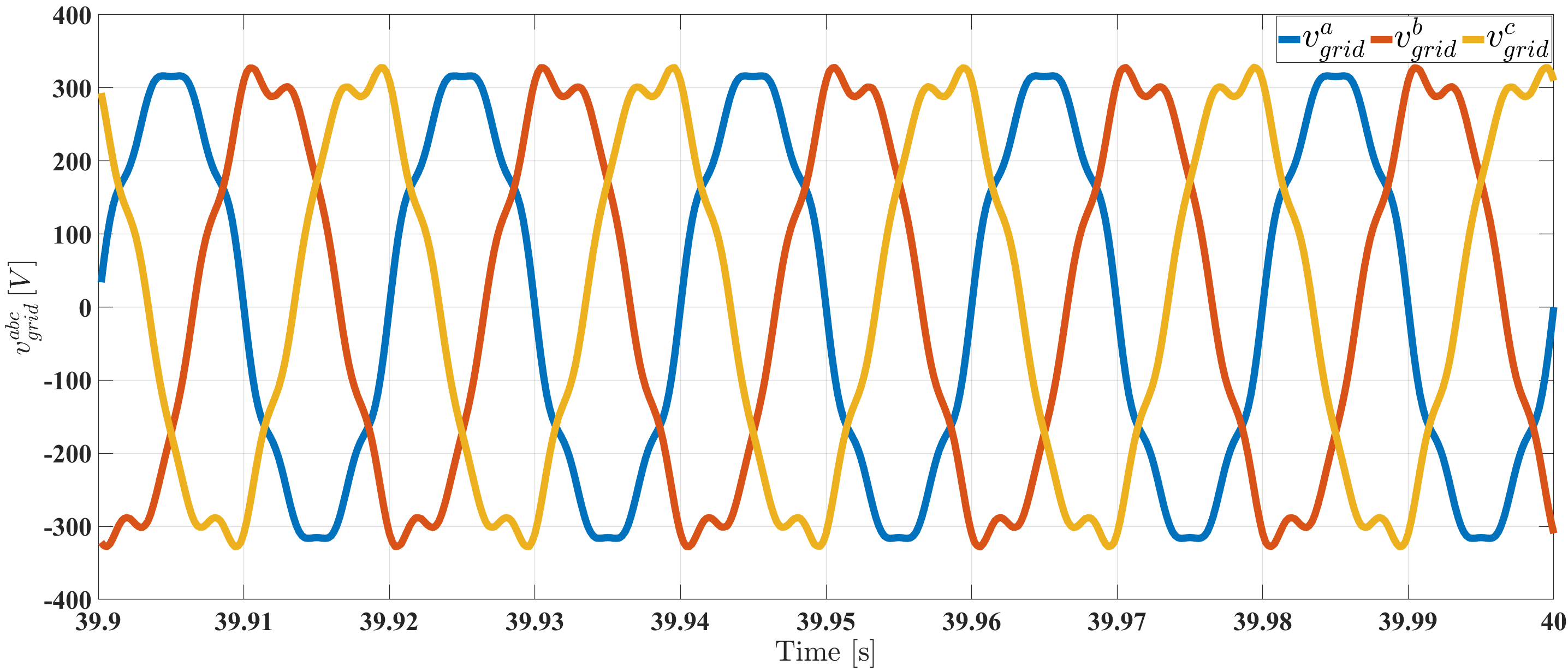}
    \caption{Grid-voltage $v^{abc}_{grid}(t)$ under harmonic excitation.}
    \label{fig:grid_voltage_waveforms}
\end{subfigure}
\hfill
\begin{subfigure}[t]{0.49\textwidth}
    \centering
    \includegraphics[width=0.9\linewidth]{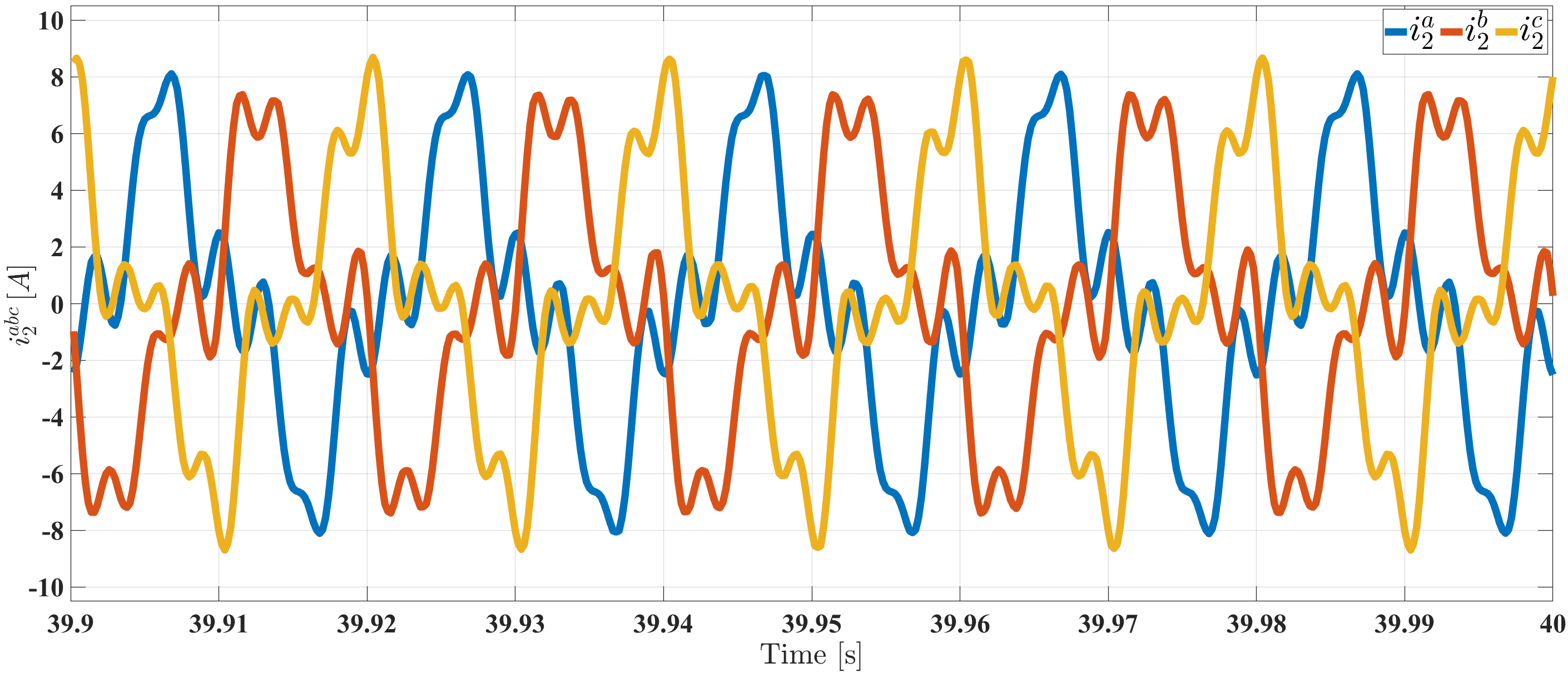}
    \caption{Grid-side filter current $i^{abc}_{2}(t)$ under harmonic excitation.}
    \label{fig:converter_current_waveforms}
\end{subfigure}

\vspace{0.1cm}

\begin{subfigure}[t]{0.49\textwidth}
    \centering
    \includegraphics[
        width=\linewidth,
        height=3.7cm,
        keepaspectratio
    ]{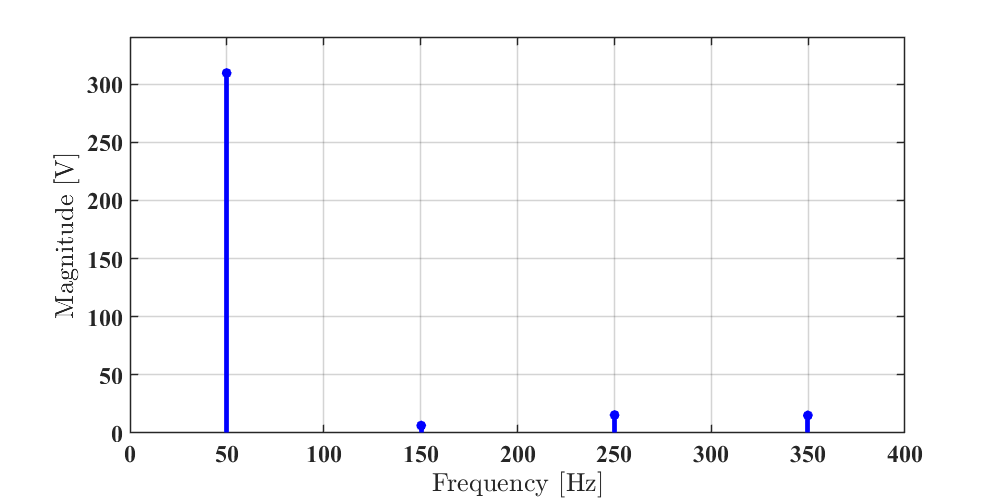}
    \caption{FFT spectrum of $v^{abc}_{grid}(t)$.}
    \label{fig:grid_voltage_fft}
\end{subfigure}
\hfill
\begin{subfigure}[t]{0.49\textwidth}
    \centering
    \includegraphics[
        width=\linewidth,
        height=3.9cm,
        keepaspectratio
    ]
    {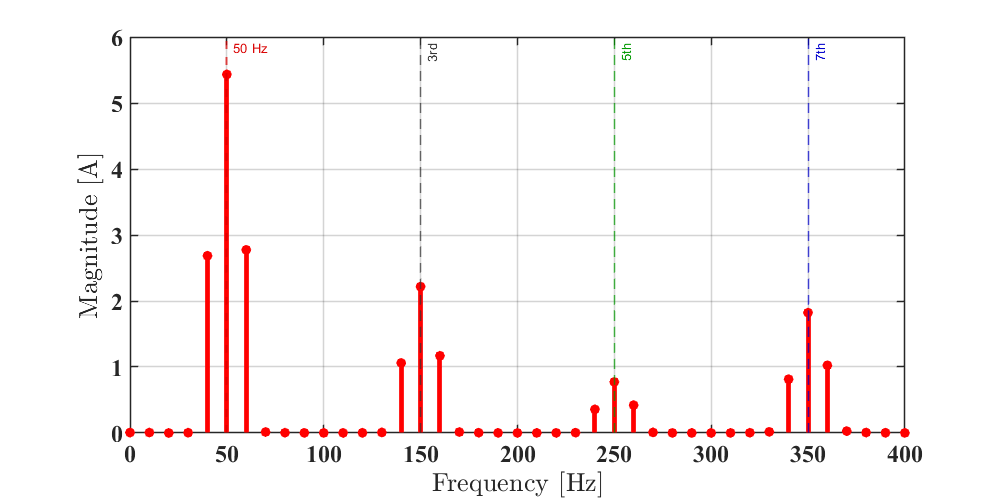}
    \caption{FFT spectrum of $i^{abc}_{2}(t)$.}
    \label{fig:converter_current_fft}
\end{subfigure}
\vspace{-0.1cm}
\caption{Three-phase grid voltage $v_{\mathrm{grid}}^{abc}(t)$ and grid-side filter current $i_2^{abc}(t)$ waveforms and corresponding FFT spectra under harmonic grid excitation.}
\label{fig:harmonic_analysis}
\end{figure*}

Taking into account the grid-voltage condition $u^\star(t)$ as in~\eqref{eq:u_star_harm}, the converter's closed-loop dynamics $\Sigma_\mathrm{conv.cl}$ admits a periodic steady-state trajectory $x^\star(t)$. Fig.~\ref{fig:harmonic_analysis} presents the resulting steady-state time-domain waveforms and corresponding FFT spectra. Note that the grid-voltage waveforms, shown in Fig.~\ref{fig:grid_voltage_waveforms}, exhibit visible distortion due to the presence of higher-order harmonic components. Moreover, the corresponding grid-side filter current, shown in Fig.~\ref{fig:converter_current_waveforms}, 
becomes significantly distorted under harmonic excitation. Furthermore, the current spectrum exhibits additional spectral components, revealing frequency-coupling effects in $\Sigma_{\mathrm{conv.cl}}$.

Having characterized $u^\star(t)$ and $x^\star(t)$, we can follow the trajectory-based linearization procedure introduced in Subsection~\ref{sec:HS_practical_approach_LTP}, to produce an LTP approximation of $\Sigma_\mathrm{conv.cl}$ along  $(x^\star(t),u^\star(t))$. Note, however, that due to the representation of the PLL phase through the unit-circle coordinates $(z_c,z_s)$ satisfying $z_c^2 + z_s^2 = 1$ (see \eqref{eq:unit_circle}), the periodic trajectory evolves on a constrained manifold. To obtain a linearization consistent with the manifold constraint, the Jacobian is projected onto the tangent space of the manifold. Let
\begin{equation}
z =
\begin{bmatrix}
z_c \\
z_s
\end{bmatrix},
\qquad
P_z = I_2 - zz^\top,
\end{equation}
and define the full-state projection matrix $P = \mathrm{diag}(I, P_z)$. The matrix $A(t)$ is then replaced by its projected counterpart $A(t) \leftarrow P\,A(t)\,P,$ which removes components orthogonal to the unit-circle manifold while preserving the local dynamics. Moreover, due to periodic excitation, we have $A(t+T) = A(t)$ and $B(t+T) = B(t)$. Although the internal phase $\theta$ evolves continuously, all state dependencies are periodic through the $(z_c,z_s)$ representation, ensuring overall time-periodicity of the linearized system.

Following the procedure described in Section~\ref{sec:HS_practical_approach_LTP}, the projected LTP model is transformed into a finite-dimensional HSS realization of the form \eqref{eq:HSS_finite_complex}. Subsequently, the equivalent real-valued representation \eqref{eq:HSS_real_stacked} with matrices $(\bar {\mathcal{A}},\bar  {\mathcal{B}},\bar  {\mathcal{C}})$ is constructed. For brevity, these matrices are not reported explicitly. The required Fourier coefficients were computed numerically from the periodic trajectories via FFT-based harmonic analysis over the finite harmonic set $\mathcal H=\{-N,\ldots, N\}$, where \(N \in \mathbb{N}\) denotes the harmonic truncation order. 

\vspace{-0.4cm}
\subsection{Results and Analysis}
Using the real-valued HSS realization~\eqref{eq:HSS_real_stacked} with matrices $(\bar{\mathcal A},\bar{\mathcal B},\bar{\mathcal C})$, the time-invariant matrix inequality~\eqref{eq:HSS_BR_LMI} is solved numerically in MATLAB using YALMIP and MOSEK. For each harmonic truncation order $N$, the matrix $\bar{\mathcal P}$ and the $\gamma$ are computed for a prescribed strictness margin $\epsilon>0$. 
The simulation and HSS parameters used throughout the case study are summarized in Table~\ref{tab:simulation_parameters}.
\begin{table}[H]
\centering
\caption{Simulation and HSS Settings.}
\label{tab:simulation_parameters}
\renewcommand{\arraystretch}{1.2}
\begin{tabular}{l l}
\hline
\textbf{Parameter} & \textbf{Value} \\
\hline
State dimension $n_x$
& $14$ \\

Input dimension $n_u$
& $3$ \\

HSS truncation order $N$
& $7$ \\

Simulation horizon
& $t \in [0,40]~\mathrm{s}$ \\
\hline
\end{tabular}
\end{table}

Table~\ref{tab:hss_metrics} summarizes the HSS stability metrics obtained for different harmonic truncation orders. The grid voltage contains third-, fifth-, and seventh-order harmonic components with amplitudes of $0.03$, $0.05$, and $0.05$, respectively.

\begin{table}[!htb] \centering \caption{HSS stability metrics for different truncation orders.} \renewcommand{\arraystretch}{1.1} \begin{tabular}{c c c} \hline Truncation Order $N$ & $\max \Re(\lambda)$ & $\gamma$ \\ \hline 1 & $-1.4209\times10^{-5}$ & 0.4545 \\ 3 & $-1.5284\times10^{-5}$ & 0.4545 \\ 5 & $-1.5600\times10^{-5}$ & 0.4545 \\ 7 & $-1.5750\times10^{-5}$ & 0.4545 \\ 9 & $-1.5836\times10^{-5}$ & 0.4545 \\ 11 & $-1.5893\times10^{-5}$ & 0.4545 \\ \hline \end{tabular} \label{tab:hss_metrics} \end{table}
The results show that the dominant eigenvalues of the HSS state matrix $\bar{\mathcal A}$ remain in the open left-half plane for all the truncation orders considered. Furthermore, as the truncation order $N$ increases, the eigenvalues of the HSS realization~\eqref{eq:HSS_real_stacked} converge to nearly invariant limiting values,  
while the $\gamma$ remains unchanged. The LMI is feasible for the prescribed strictness margin $\epsilon=2.6146\times10^{-5}$ for all truncation orders considered.  
These results indicate that the conditions of Theorem~\ref{thm:practical_harmonic_stab} are satisfied for all truncation orders considered in this case study, thereby establishing the local harmonic stability of the considered nonlinear CBPSs.
\vspace{-0.2cm}
\section{Conclusions}\label{sec:conclusions}

In this paper, we have introduced a control-theoretic notion of harmonic stability for nonlinear CBPSs. Concretely, we formulated harmonic stability as a combination of small-signal BIBO stability and local internal stability with respect to a nominal periodic trajectory, thereby unifying harmonic-domain interpretations with classical input--output stability concepts.  Moreover, we have shown that the local harmonic stability of nonlinear CBPSs can be analyzed through the properties of their LTP approximations around periodic operating trajectories. In particular, under suitable smoothness assumptions on the nonlinear dynamics, local harmonic stability can be certified through time-varying matrix inequalities associated with the system's LTP approximation. Based on this framework, we proposed a computationally tractable approach for assessing harmonic stability through HSS representations together with finite-dimensional LMIs.

The proposed framework was demonstrated on a grid-following converter under harmonically distorted grid conditions. The results highlighted the influence of harmonic operating conditions and model truncation on stability margins and demonstrated the importance of capturing cross-frequency coupling effects in CBPSs.
Future work will focus on harmonic-aware control design for nonlinear CBPSs, including extensions to interconnected converter-dominated networks, control strategies to improve harmonic stability margins and reduce harmonic distortion under distorted operating conditions, and formal truncation guarantees for harmonic state-space models.
\appendices


\section{Proof of Proposition \ref{thm:lin_to_local_iss}}\label{appendix_1}

    Under the conditions of Proposition~\ref{thm:lin_to_local_iss}, suppose that there exists a continuously differentiable, positive-definite matrix function $t\mapsto P(t) \succ 0$ and positive constants $\epsilon$ and $\gamma$ such that
\eqref{eq:P(t)} is true for all $t \geq 0$.
Define a positive definite function $(t,\tilde{x})\mapsto V(t, \tilde x) = \tilde x^\top P(t)\,\tilde x$ with $P(t)$ satisfying~\eqref{eq:P(t)}. From \cite[Theorem~4.12]{khalil}, we have that $c_1 I_n \preceq P(t) \preceq c_2I_n, \, c_1,c_2 >0$ for all $t \geq 0$, implies that 
\begin{equation}\label{eq:bounds_on_V}
\alpha_1(\Vert \tilde{x}\Vert_2):=c_1\Vert\tilde{x}\Vert_2^2 \leq V(t,\tilde{x}) \leq c_2\Vert\tilde{x}\Vert_2^2=:\alpha_2(\Vert \tilde{x}\Vert_2).
\end{equation}
Evaluating the time-derivative  of $V$ along the solutions of \eqref{eq:errorF_short} we obtain the following:
\begin{align}\label{eq:dot_V_1}
  \dot V
  &= \tilde x^T\bigl(\dot P(t) + A^\top(t) P(t) + P(t)\,A(t)\bigr)\,\tilde x  \nonumber\\
  &\phantom{=}+\tilde x^\top (P(t)\,B(t) + B^\top(t)P(t))\,\tilde u \nonumber\\
  & \phantom{=} + 2\tilde{x}^\top P(t)\phi(t,\tilde{x},\tilde{u}) \nonumber\\
  & = \mathcal{Q}(t,\tilde{x},\tilde{u}) +2\tilde{x}^\top P(t)\phi(t,\tilde{x},\tilde{u}),
\end{align}
where
\begin{align*}
&\mathcal{Q}(t,\tilde{x},\tilde{u})\\
&=\begin{bmatrix}
      \tilde{x}\\
      \tilde{u}\end{bmatrix}^\top \begin{bmatrix}
          \dot{P}(t)+A^\top(t)P(t)+P(t)A(t) & P(t)B(t)\\
          B^\top(t)P(t) & 0
      \end{bmatrix}\begin{bmatrix}
          \tilde{x}\\
          \tilde{u}
      \end{bmatrix}.
\end{align*}
We proceed to compute a suitable upper bound for $\dot V$ in \eqref{eq:dot_V_1}. Let us first note that since  the matrix inequality \eqref{eq:P(t)} holds, the following holds for all $(t\geq0,\tilde{x},\tilde{u})$:
\begin{align}\label{eq:bound_calQ_1}
\mathcal{Q}(t,\tilde{x},\tilde{u}) & \leq \begin{bmatrix}
      \tilde{x}\\
      \tilde{u}\end{bmatrix}^\top \begin{bmatrix}
         -C^\top(t) C(t)-\epsilon I_n & 0\\
          0 & \gamma^2 I_m
      \end{bmatrix}\begin{bmatrix}
          \tilde{x}\\
          \tilde{u}
      \end{bmatrix} \nonumber\\
      & = -\tilde{x}^\top C^\top (t)C(t) \tilde{x}-\epsilon \tilde{x}^\top \tilde{x}+\gamma^2 \tilde{u}^\top \tilde{u} \nonumber\\
      & \leq -\epsilon \tilde{x}^\top \tilde{x}+\gamma^2 \tilde{u}^\top \tilde{u}.
\end{align}

Let us focus now on the second term in the right-hand side of \eqref{eq:dot_V_1}. Since $f$ is continuously differentiable on $D_x\times D_u$ (see Assumption~\ref{assum:f}), we get, considering \eqref{eq:mean_value_f}, that for any $\ell_1>0$ there exists $0<r_x'\leq r_x$ and $0<r_u'\leq r_u$ such that
\begin{align*}
  \Vert \phi(t,\tilde{x},\tilde{u})\Vert_2\leq \ell_1\left(\Vert \tilde{x}\Vert_2+\Vert \tilde{u}\Vert_2 \right),~\forall \Vert\tilde{x}\Vert_2<r_x',~\Vert \tilde{u}\Vert_2<r_u'.  
\end{align*}
Then,
\begin{align}\label{eq:bound_2nd_term_dotV}
    2\tilde{x}^\top P(t)\phi(t,\tilde{x},\tilde{u}) & \leq 2c_2\ell_1 \Vert \tilde{x}\Vert_2 (\Vert \tilde{x}\Vert_2 +\Vert \tilde{u}\Vert_2),\nonumber\\
    & = 2c_2\ell_1 \Vert \tilde{x}\Vert_2^2 + 2c_2\ell_1\Vert \tilde{x} \Vert_2\Vert \tilde{u}\Vert_2 \nonumber\\
    &  \le 3c_2\ell_1\|\tilde{x}\|_2^2 + c_2\ell_1\|\tilde{u}\|_2^2,
\end{align}
where we have used the fact that $P(t)\preceq c_2 I_n$ as well as Young's inequality on the term $\Vert \tilde{x} \Vert_2\Vert \tilde{u}\Vert_2$. 

In view of \eqref{eq:bound_calQ_1} and \eqref{eq:bound_2nd_term_dotV}, we obtain that the  following inequality holds for all $t\geq 0$ and for all $\Vert \tilde{x}\Vert_2\leq r_x'$ and $\Vert \tilde{u}\Vert_2 \leq r_u'$:
\begin{align}\label{eq:bound_dot_V_2}
  \dot V & \leq   -\theta \Vert \tilde{x}\Vert_2^2 +(\gamma ^2+c_2\ell_1)\Vert \tilde{u}\Vert_2^2,
\end{align}
where
\begin{equation}\label{eq:theta}
\theta=\epsilon-3c_2\ell_1.
\end{equation}
Let $\ell_1>0$ be small enough  so  that $\theta>0$ and let $0<\kappa<1$ be arbitrary.  Then, \eqref{eq:bound_dot_V_2} implies the following:
\begin{align}\label{eq:bound_dot_V_3}
   \dot V & \leq -(1-\kappa)\theta \Vert \tilde{x}\Vert_2^2 -\kappa \theta \Vert \tilde{x}\Vert_2^2 \nonumber\\
   & \phantom{=}+ \underbrace{(\gamma^2+c_2\ell_1)(r_u')^2}_{:=\Psi},
\end{align}
 where we have used the fact that $\Vert \tilde{u}\Vert_2\leq r_u'$. It follows from~\eqref{eq:bound_dot_V_3} that
\begin{align}\label{eq:bound_dot_V_4}
    \dot V\leq -(1-\kappa)\theta \Vert \tilde{x}\Vert_2^2,~~ \forall \Vert \tilde{x}\Vert_2\geq \sqrt{\frac{\Psi}{\kappa \theta}}=:\mu>0. 
\end{align}
In view of \eqref{eq:bounds_on_V} and  \eqref{eq:bound_dot_V_4}, we can conclude through \cite[Theorem~4.18]{khalil} that any solution $\tilde{x}$ of the error dynamics \eqref{eq:error_dynamics}, starting sufficiently close from the origin, is uniformly bounded for all $t\geq t_0$. To see that  local uniform internal stability of \eqref{eq:error_dynamics} with respect to the origin holds, it is sufficient to note from \eqref{eq:bound_dot_V_4} that if $\tilde{u}\equiv 0$, then $\dot V\leq -(1-\kappa)\theta \Vert \tilde{x}\Vert_2^2$
 for all $\Vert \tilde{x}\Vert_2>0$, as $\mu$ could be taken as zero, which directly implies local uniform asymptotic stability of the origin by \cite[Theorem~4.8]{khalil}. 

Having established the local uniform boundedness of $\tilde{x}$, we note now that since $\tilde{y}$  depends continuously on $\tilde{x}$, we find that the output error $\tilde{y}$ is also locally uniformly bounded. Indeed, by recalling that $h$ is continuously differentiable on the domain $D_x$ (see Assumption~\ref{assum:f}), we get, considering  \eqref{eq:mean_value_h},  that for any $\ell_2>0$ there exists $ 0<r_x''\leq r_x$ such that
\begin{equation}
  \Vert \psi(t,\tilde{x})\Vert_2\leq   \ell_2 \Vert \tilde{x} \Vert_2,\quad \forall \Vert \tilde{x}\Vert_2 < r_x''.
\end{equation}
Moreover, since $x^\star(t)\in \text{int}\{D_x\}$ for all $t\geq t_0$, we have that
\begin{align*}
\Vert C(t)\Vert_2 \leq k_{C}    ,
\end{align*}
for some $k_C\geq 0$. Then, $\Vert \tilde{y}\Vert_2 \leq k_C \Vert \tilde{x}\Vert_2 +\ell_2 \Vert \tilde{x}\Vert_2^2$ for all $\Vert \tilde{x}\Vert_2\leq r_x''$. Then,  the error dynamics \eqref{eq:error_dynamics} are locally BIBO. We conclude then that the nonlinear system \eqref{eq:system} is locally harmonically stable with 
  respect to the nominal periodic motion $(u^\star(t),x^\star(t), y^\star(t))$ in the sense of Definition~\ref{def:HS}.

\section{Proof of Theorem~\ref{thm:practical_harmonic_stab}}\label{appendix_2}

Assume that all the conditions of the theorem hold and recall that the real-valued HSS dynamics \eqref{eq:HSS_real_stacked} are equivalent to their complex counterpart \eqref{eq:HSS_finite_complex}. Define $\bm{\cP}:= (P_r + jP_i)$. Then $\bar{\cP} = \bar{\cP}^\top \succ 0$ is equivalent to $\bm{\cP} = \bm{\cP}^\star \succ 0$. Now, utilizing \eqref{eq:HSS_real_img_state_decomposition} and \eqref{eq:HSS_real_img_matrices_decomposition}, it is possible to verify  that the  inequality \eqref{eq:HSS_BR_LMI} is equivalent to the following one:
\begin{equation} \label{eq:HSS_BR_LMI_finite_complex} 
\begin{bmatrix} (\bm{\mathcal A}-\bm{\mathcal N})^\star \bm{\mathcal P}
+ \bm{\mathcal P}(\bm{\mathcal A}-\bm{\mathcal N}) 
+ \bm{\mathcal C}^\star \bm{\mathcal C}  +\epsilon I_N & \bm{\mathcal P} \bm{\mathcal B} \\ 
\bm{\mathcal B}^\star \bm{\mathcal P} & -\gamma^2 I_N
\end{bmatrix}
\preceq 0.
\end{equation}
Since  \eqref{eq:HSS_BR_LMI} holds  for any $N \ge N_0$, taking the limit $N \to \infty$ implies that \eqref{eq:HSS_BR_LMI_finite_complex} is satisfied for $N \to \infty$, leading to the following infinite-dimensional matrix inequality 
\begin{equation} \label{eq:HSS_BR_LMI_infinite} 
\begin{bmatrix} (\mathcal A-\mathcal N)^\star \mathcal P
+ \mathcal P(\mathcal A-\mathcal N) 
+ \mathcal C^\star \mathcal C  +\epsilon \cI & \mathcal P \mathcal B \\ 
\mathcal B^\star \mathcal P & -\gamma^2 \cI
\end{bmatrix}
\preceq 0 ,
\end{equation}
where $\cI$ is the infinite-dimensional identity operator and $\cP = \cP^\star \succ 0$ is an infinite-dimensional Toeplitz operator.
Using the Schur complement lemma, \eqref{eq:HSS_BR_LMI_infinite} is equivalent to
\begin{align*}
   (\mathcal A-\mathcal N)^\star \mathcal P
+ \mathcal P(\mathcal A-\mathcal N) 
+ \mathcal C^\star \mathcal C + \frac{1}{\gamma^2}\cP \cB \cB^\star \cP +\epsilon \cI \preceq 0.
\end{align*}
Then, there exists a Toeplitz Hermitian positive-semidefinite operator $\cQ$ such that 
\begin{align}\label{eq:proof_appendix_2_Schur_infinite_dim}
&   (\mathcal A-\mathcal N)^\star \mathcal P
+ \mathcal P(\mathcal A-\mathcal N) 
+ \mathcal C^\star \mathcal C+\cdots \nonumber\\
&\phantom{=}\cdots + \frac{1}{\gamma^2}\cP \cB \cB^\star \cP +\epsilon \cI + \cQ = 0.
\end{align}
Let $\cT^{-1}$ be the inverse Toeplitz operator mapping $\cP$ into a $T$-periodic matrix function $t\mapsto P(t)$, and analogously mapping $\mathcal{A},\mathcal{B},\mathcal{C},\mathcal{Q}$ into the $T$-periodic matrix functions $A(t),B(t),C(t),Q(t)$, with $Q(t)\succeq 0$ for all $t\geq 0$. Then, invoking  \cite[Theorem 6]{Blin2022}, it is possible to claim that \eqref{eq:proof_appendix_2_Schur_infinite_dim} is equivalent to the following matrix equation:
\begin{align*}
    &\dot{P}(t) + A^\top(t)P(t) + P(t)A(t) + C^\top(t)C(t)+\cdots \\
    &\cdots + \frac{1}{\gamma^2}P(t)B(t)B^\top(t)P(t) + \epsilon I + Q(t) = 0,
\end{align*}
which, by the Schur complement lemma, implies the feasibility of the time-varying matrix inequality~\eqref{eq:P(t)}. Therefore, by Proposition~\ref{thm:lin_to_local_iss}, the nonlinear system~\eqref{eq:system} is locally harmonically stable with respect to $(u^\star(t),x^\star(t),y^\star(t))$ in the sense of Definition~\ref{def:HS}.

\bibliographystyle{IEEEtran}
\bibliography{references}

\vfill
\end{document}